\newtheorem{theorem}{Theorem}[subsection]
\def \eps {{\epsilon}}
\def \sig {{\sigma}}
\def \D {{\rm d}}
\def \hP {{\widehat{P}}}
\def \hR {{\widehat{R}}}
\def \hS {{\widehat{S}}}
\def \hY {{\widehat{Y}}}
\def \hp {{\widehat{p}}}
\def \hN {{\widehat{N}}}
\def \htau {{\widehat{\tau}}}
\def \VV {{\mathbb{V}}}
\def \EE {{\mathbb{E}}}
\def \ii {{\mathbbm{1}}}
\begin{document}

\title{Multilevel Monte Carlo method for jump-diffusion SDEs}
\author{Yuan~Xia\thanks{%
yuan.xia@oxford-man.ox.ac.uk\newline
\hspace*{1em} Xia is grateful to the China Scholarship Council for financial
support, the research  and has been
supported by the Oxford-Man Institute of Quantitative Finance. The author thanks Prof Mike Giles for the guidance as supervisor and Dr Christoph Reisinger and Dr Lajos Gergely Gyurko for their reading and helpful comments. This work is based on a talk in MCQMC10 (mcqmc.mimuw.edu.pl/Presentations/xia.pdf). All errors are mine.} \\
\multicolumn{1}{p{.7\textwidth}}{\centering\small{\emph{Oxford-Man Institute
of Quantitative Finance and Mathematical Institute, Oxford, U.K.}}} }
\maketitle

\begin{abstract}


We investigate the extension of the multilevel Monte Carlo path simulation
method to jump-diffusion SDEs. We consider models
with finite rate activity , using a jump-adapted
discretisation in which the jump times are computed and added to the
standard uniform discretisation times. The key component in multilevel
analysis is the calculation of an expected payoff difference between a
coarse path simulation and a fine path simulation with twice as many
timesteps. If the Poisson jump rate is constant, the jump times are the same
on both paths and the multilevel extension is relatively straightforward,
but the implementation is more complex in the case of state-dependent jump
rates for which the jump times naturally differ.
\end{abstract}

\date{}




\section{Introduction}

In the Black-Scholes Model, the price of an option is given by the expected
value of a payoff depending upon the solution of a stochastic differential
equation(SDE) satisfied by the stock price. The model assumes that the
behavior of the stock price is depicted by a SDE driven by Brownian motion,
\begin{equation}
\D S(t)=a(S,t)\,\D t+b(S,t)\,\D W(t),\quad 0\leq t\leq T,  \label{eq:SDE}
\end{equation}%
with given initial data $S_{0}$.

Although this model is widely used, the fact that asset returns are not
log-normal has motivated people to suggest models which better capture the
characteristics of the stock price dynamics. Merton\cite{merton76} proposed
a jump-diffusion process for the stock price. To be specific, the stock
price follows a jump-diffusion SDE:

\begin{equation}
\D S(t)=a(S(t-),t)\D t+b(S(t-),t)\D W(t)+c(S(t-),t)\D J(t),\quad 0\leq t\leq
T,  \label{eq:jumpSDE}
\end{equation}%
where the jump term $J(t)$ is a compound Poisson process $%
\sum_{i=1}^{N(t)}(Y_{i}-1)$, the jump magnitude ~$Y_{i}$ has a prescribed
distribution, and $N(t)$ is a Poisson process with intensity $\lambda $,
independent of the Brownian motion. Due to the existence of jumps, the
process is a c\`{a}dl\`{a}g process, i.e. having right continuity with left
limits. We note that $S(t-)$ denotes the left limit of the process while $%
S(t)=\lim_{s\rightarrow t+}S(t)$. In \cite{merton76}, Merton also assumed
that $\log Y_{i}$ has a normal distribution with mean $a$ and variance $b$,
namely $\log Y_{i}\sim N(a,b)$.

There are several ways to generalize Merton model from different aspects. A
possible way is to consider the case where the frequency of jump is
infinite, where general L\'{e}vy processes can be used. Another direction
(for example in \cite{gm03}) is to introduce dependency between parameters,
leaving the expected number of jumps finite within finite horizon. As a
particular numerical examples, we consider the case where instantaneous jump
rate relies on the stock price, namely $\lambda(t)=\lambda(S_t,t)$.

In pursuit of risk-neutral pricing of options, we are interested in the
expected value of a function of the terminal state, $f(S(T))$. In the simple
case of a European option, the expectation can be directly simulated, while
in the case of Asian, lookback and barrier options the valuation depends on
the entire path $S(t),0\leq t\leq T$. The expected value can be estimated by
a simple Monte Carlo method with a proper numerical discretisation scheme.
However, to achieve a root-mean-square (RMS) error of $O(\eps)$ using an
Euler-Maruyama discretisation would require $O(\eps^{-2})$ independent
paths, each with $O(\eps^{-1})$ timesteps, leading to a computational
complexity of $O(\eps^{-3})$. This is quite time consuming compared to the
case of path-independent options.

Giles \cite{giles07b} \cite{giles08b} has recently introduced a multilevel
Monte Carlo path simulation method for the option pricing calculation. This
improves the computational efficiency of Monte Carlo path simulation by
combining results using different numbers of timesteps. This can be viewed
as a generalisation of the two-level method of Kebaier \cite{kebaier05} and
is also similar in approach to Heinrich's multilevel method for parametric
integration \cite{heinrich01}. The first paper \cite{giles08b} proposed the
multilevel Monte Carlo approach and proved that it can lower the
computational complexity of path-dependent Monte Carlo evaluations to $O(\eps%
^{-2}(\log \eps)^{2})$, verified by numerical results using the simple
Euler-Maruyama discretisation. The second paper \cite{giles07b} demonstrated
that the computational cost can be further reduced to $O(\eps^{-2})$ by
using the Milstein discretisation. This has been extended by Dereich and
Heidenreich \cite{dh10,dereich11} to approximation methods for both finite
and infinite activity L\'{e}vy-driven SDEs with globally Lipschitz payoffs.
The work in this paper differs in considering simpler finite activity
jump-diffusion models, but more challenging non-Lipschitz payoffs, and also
uses a more accurate Milstein discretisation to achieve an improved order of
convergence for the multilevel correction variance which will be defined
later.

In this paper we applies the multilevel approach to the Monte Carlo
simulation of path-dependent option pricing under jump-diffusion processes.
We first consider the case where the jump rate is constant then take into
account the state-dependent rate case. In both cases, in order to calculate
coarse-path samples from fine-path sample using brownian interpolation , we
adopt a jump-adapted Milstein discretisation scheme proposed by \cite%
{Platen82}, which explicitly simulates the times when jumps occur.
Furthermore, we construct multilevel estimators for corresponding
path-dependent payoffs coping with challenges caused by jumps. Through
constructing payoff estimators by Brownian bridge technique, high order
multilevel correction term variance convergence rate is achieved. In the
state-dependent rate case, we use two approaches, which are called
cumulative intensity method and thinning method to tackle the
unsynchronization of jump times in the fine and coarse grids. Numerical
results show similar improvement in computational efficiency compared with
previous achievement \cite{giles07b} for diffusion processes. Generally,
using the jump-adapted Milstein scheme with the multilevel approach, we can
reduce the computation cost to $O(\eps^{-2})$ in terms of RMS error $\eps$.

In the following parts of the paper, we first review the Multilevel Monte
Carlo method for diffusion processes. The next section describes the
jump-adapted discretisation of jump-diffusion processes and its advantages
for facilitating the multilevel approach. Then we discuss the path
simulation and estimator construction for the jump-adapted discretisation
with the multilevel approach and present numerical results of Asian,
lookback, barrier and digital options. The next part establishes two methods
to deal with state-dependent intensity. The final section draws conclusions
and indicates directions of future research.

\section{Multilevel Monte Carlo method}

Suppose we perform Monte Carlo path simulations with fixed grid timesteps $%
h_{\ell} = 2^{-\ell}\, T$, $l=0, 1, \ldots, L$. For a given Brownian path $W(t)$,
let $P$ denote the payoff, and let $\hP_{\ell}$ denote its approximation by a
numerical scheme with timestep $h_{\ell}$. As a result of the linearity of the
expectation operator,
\begin{equation}
\EE[\hP_L] = \EE[\hP_0] + \sum_{\ell=1}^L \EE[\hP_{\ell} \!-\! \hP_{\ell-1}].
\label{eq:identity}
\end{equation}

Let $\hY_0$ denote an estimator for $\EE[\hP_0]$ using $N_0$ paths. Suppose
for different $l>0$, we use $N_{\ell}$ independent paths to estimate $%
\EE[\hP_{\ell}
\!-\! \hP_{\ell-1}]$.
\begin{equation}
\hY_{\ell} = N_{\ell}^{-1} \sum_{i=1}^{N_{\ell}} \left( \hP_{\ell}^{(i)} \!-\! \hP_{\ell-1}^{(i)}
\right).  \label{eq:est_l}
\end{equation}
The Multilevel method facilitates the fact that $\VV[\hP_{\ell} \!-\! \hP_{\ell-1}]$
decreases with $l$ to adaptively choose $N_{\ell}$ and hence reduce the
computational cost.

The cost reduction effect is summarized in the following theorem:

\begin{theorem}
\label{thm:cc}
Let $P$ denote a functional of the solution of stochastic differential
equation (\ref{eq:SDE}) for a given Brownian path $W(t)$, and let $\hP_{\ell}$
denote the corresponding approximation using a numerical discretisation with
timestep $h_{\ell} = 2^{-l}\, T$.

If there exist independent estimators $\hY_{\ell}$ based on $N_{\ell}$ Monte Carlo
samples, and positive constants $\alpha\!\geq\!{\ \textstyle \frac{1}{2} },
\beta, c_1, c_2, c_3$ such that

\begin{itemize}
\item[i)] $\displaystyle
\left| \EE[\hP_{\ell} - P] \right|\leq c_1\, h_{\ell}^\alpha $

\item[ii)] $\displaystyle
\EE[\hY_{\ell}] = \left\{
\begin{array}{ll}
\EE[\hP_0], & l=0 \\[0.1in]
\EE[\hP_{\ell} - \hP_{\ell-1}], & l>0%
\end{array}%
\right. $

\item[iii)] $\displaystyle
\VV[\hY_{\ell}] \leq c_2\, N_{\ell}^{-1} h_{\ell}^\beta $

\item[iv)] $C_{\ell}$, the computational complexity of $\hY_{\ell}$, is bounded by
\begin{equation*}
C_{\ell} \leq c_3\, N_{\ell}\, h_{\ell}^{-1},
\end{equation*}
\end{itemize}

then there exists a positive constant $c_4$ such that for any $\eps \!<\!
e^{-1}$ there are values $L$ and $N_{\ell}$ for which the multilevel estimator
\begin{equation*}
\hY = \sum_{\ell=0}^L \hY_{\ell},
\end{equation*}
has a mean-square-error with bound
\begin{equation*}
MSE \equiv \EE\left[ \left(\hY - \EE[P]\right)^2\right] < \eps^2
\end{equation*}
with a computational complexity $C$ with bound
\begin{equation*}
C \leq \left\{%
\begin{array}{ll}
c_4\, \eps^{-2} , & \beta>1, \\[0.1in]
c_4\, \eps^{-2} (\log \eps)^2, & \beta=1, \\[0.1in]
c_4\, \eps^{-2-(1-\beta)/\alpha}, & 0<\beta<1.%
\end{array}%
\right.
\end{equation*}
\end{theorem}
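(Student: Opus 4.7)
The plan is to follow the standard MLMC complexity analysis by splitting the mean-square error into bias-squared plus variance, treating each part separately, and then solving a constrained optimisation for the sample sizes $N_\ell$.

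First I would use the telescoping identity (\ref{eq:identity}) together with hypothesis (ii) to show that $\EE[\hY]=\EE[\hP_L]$, so the MSE decomposes as
\begin{equation*}
\mathrm{MSE} \;=\; \VV[\hY] + \bigl(\EE[\hP_L-P]\bigr)^{2}.
\end{equation*}
I would then allocate half of $\eps^{2}$ to each contribution. For the bias, hypothesis (i) gives $(\EE[\hP_L-P])^{2}\leq c_1^{2}h_L^{2\alpha}$; requiring this to be at most $\eps^{2}/2$ and using $h_L=2^{-L}T$ determines
\begin{equation*}
L \;=\; \left\lceil \tfrac{1}{\alpha}\log_{2}\!\bigl(\sqrt{2}\,c_1 T^{\alpha}/\eps\bigr)\right\rceil,
\end{equation*}
so $h_L^{-1}$ is at most a constant multiple of $\eps^{-1/\alpha}$. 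The condition $\alpha\geq 1/2$ together with $\eps<e^{-1}$ is what keeps this $L$ well-defined and guarantees $L\geq 0$.

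Next, using independence of the $\hY_\ell$ and hypothesis (iii), I would write $\VV[\hY]\leq c_2\sum_{\ell=0}^{L}N_\ell^{-1}h_\ell^{\beta}$ and minimise the total cost $C=c_3\sum_{\ell=0}^{L}N_\ell h_\ell^{-1}$ (from (iv)) subject to the variance constraint $\sum N_\ell^{-1}h_\ell^{\beta}\leq \eps^{2}/(2c_2)$. A Lagrange-multiplier argument gives the optimal choice
\begin{equation*}
N_\ell \;\propto\; h_\ell^{(\beta+1)/2},\qquad
N_\ell \;=\; \left\lceil 2c_2\eps^{-2}\,h_\ell^{(\beta+1)/2}\sum_{k=0}^{L}h_k^{(\beta-1)/2}\right\rceil.
\end{equation*}
Substituting back and using $h_\ell^{-1}\cdot h_\ell^{(\beta+1)/2}=h_\ell^{(\beta-1)/2}$, the total cost satisfies
\begin{equation*}
C \;\leq\; 2c_2 c_3\,\eps^{-2}\left(\sum_{\ell=0}^{L}h_\ell^{(\beta-1)/2}\right)^{2} \;+\; c_3\sum_{\ell=0}^{L}h_\ell^{-1},
\end{equation*}
where the second term (from the ceiling) is bounded by a multiple of $h_L^{-1}\lesssim \eps^{-1/\alpha}\leq \eps^{-2}$ since $\alpha\geq 1/2$.

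The final step is a case analysis on the geometric sum $\sum_{\ell=0}^{L}2^{-\ell(\beta-1)/2}$. For $\beta>1$ it converges, giving $C=O(\eps^{-2})$; for $\beta=1$ it equals $L+1=O(|\log\eps|)$, producing the $(\log\eps)^{2}$ factor; for $0<\beta<1$ it grows like $h_L^{(\beta-1)/2}\asymp \eps^{-(1-\beta)/(2\alpha)}$, whose square yields the exponent $-2-(1-\beta)/\alpha$. The main delicate point I expect is bookkeeping the rounding from the ceilings in $N_\ell$ and $L$ so that the contributions they introduce remain of lower order than the leading terms in each of the three regimes; the use of $\alpha\geq 1/2$ and $\eps<e^{-1}$ is exactly what makes this bookkeeping go through.
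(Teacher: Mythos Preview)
Your argument is correct and is precisely the standard proof from \cite{giles08b}; the paper itself does not give an independent proof but simply cites that reference, so your proposal matches the intended approach exactly.
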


\begin{proof}
See \cite{giles08b}.
\end{proof}

In the case of the jump-adapted discretisation, $h_{\ell}$ should be taken to be
the uniform timestep at level $l$, to which the jump times are added to form
the set of discretisation times. We have to define the computational
complexity as the expected computational cost since different paths may have
different numbers of jumps. However, the expected number of jumps is finite
and therefore the cost bound in assumption iv) will still remain valid for
an appropriate choice of the constant $c_3$.

According to the theorem, the larger the variance convergence rate $\beta$ ,
the greater the reduction is the computation cost by the multilevel
algorithm. In the case of a Lipschitz continuous European payoff, using the
Milstein discretisation immediately leads to the result that $V_{\ell} = O(h_{\ell}^2)$%
, corresponding to $\beta\!=\!2$. Thus the main task to improve the
performance of the multilevel method is to use using schemes with high order
strong convergence rate and constructing appropriate estimators so that $%
\beta>1$ could be achieved. For the jump-diffusion process, the objective is
to obtain $\beta>1$ through adopting a high order scheme.

\section{A Jump-adapted Milstein discretisation}

To simulate jump-diffusion processes, it is possible to use fixed time grid
schemes as for geometric Brownian motion. The Euler-Maruyama scheme for
jump-diffusion processes has $O(\sqrt{h})$ strong convergence (\cite{pb10}).
However, it would be more difficult to pursue higher order strong
convergence. To achieve a higher order strong convergence for jump-diffusion
processes, the It\^{o}-Taylor expansion will involve some double integrals
of white noise and the Poisson random measure \cite{BP2005}, which increases
the complexity of the simulation.

Another problem which might be encountered for fixed-time grid schemes is
the construction of estimators for the payoff function of path-dependent
options. Adoption of the previous Brownian bridge interpolation is difficult
since the minimum or other functional of paths is difficult to calculate
since the joint density of diffusion and jump is much more complex than pure
diffusion one.

In order to avoid simulating double stochastic integrals as well as to
identify the time at which the jump occurs, we use the so-called
jump-adapted approximation proposed by Platen in \cite{Platen82}. This
jump-adapted scheme would improve the computational tractability compared to
other fixed time grid discretisation schemes with the same weak/strong
convergence order.

Suppose that we have simulated the jump time grid $\mathbb{J}%
=\{\tau_1,\tau_2,\ldots,\tau_m\}$, which includes times at which jumps occur
in the $[0,T]$. On the other hand, consider a fixed time grid constituted by
$N$ timesteps, $t_i^{\prime }=i\times \frac{T}{N},~i=1,\ldots,N,$ which is
used in discretisation schemes of Brownian SDEs. Now consider a
superposition of them as a new grid $\mathbb{T}=\{0=t_0<t_1<t_2<\ldots<t_M=T%
\}$. As a result, the length of timestep of the new grid will be no greater
than $h=\frac{T}{N}$.

Within every timestep of the new grid, the diffusion part is separated from
the effect of the possible jump, because the jump only occurs at the grid
point. Thus we can approximate the path with established schemes for
diffusion processes, and deal with corresponding adjustment if there is a
jump at the right end of interval. This procedure is called the jump-adapted
scheme.


The algorithm of simulation via the jump-adapted scheme could be described
as the following steps:

\begin{enumerate}
\item Set $i=1,~j=1$, $t^{\prime }=t^{\prime }_0$;

\item Generate jump time $\tau_i$ in terms of its distribution;

\item While $(\tau_i<t^{\prime }_j)$ do

(1). Simulate the process within $[t^{\prime },\tau_i)$, in which the
process is driven purely by Brownian motion, then simulate the jump at $%
\tau_i$;

(2). Set the timestep $h_i=\tau_i-t^{\prime }$, $t^{\prime }=\tau_i$;

(3). $i=i+1$, and generate next jump time $\tau_i$ in terms of its
distribution;

\item Simulate the process within $[t^{\prime },t^{\prime }_j]$;

\item Set the timestep $h_i=\tau_i-t^{\prime }$, $j=j+1$, $t^{\prime
}=t^{\prime }_j$ and goto 3. 
\end{enumerate}

Now we introduce a jump-adapted Milstein scheme for a scalar jump-diffusion
SDE

\begin{eqnarray*}
\hS_{n+1}^{-} &=&\hS_{n}+a_{n}\,h_{n}+b_{n}\,\Delta W_{n}+{\textstyle\frac{1%
}{2}}\,\frac{\partial b_{n}}{\partial S}\,b_{n}\,(\Delta W_{n}^{2}-h_{n}), \\
\hS_{n+1} &=&\left\{
\begin{array}{lll}
& \hS_{n+1}^{-}+c(\hS_{n+1}^{-},t_{n+1})(Y_{i}-1), & \text{ when}%
~t_{n+1}=\tau _{i}; \\
& \hS_{n+1}^{-}, & \text{ otherwise}.%
\end{array}%
\right.
\end{eqnarray*}%
Where the subscript $n$ is used to denotes the timestep index, $\hS_{n}^{-}=%
\hS(t_{n}-)$ is the left limit of the approximated solution, and $Y_{i}$ is
the jump magnitude at $\tau _{i}$.

In sum, jump-adapted schemes explicitly compute jump times, which are
relatively rare in the entire time span. Thus, compared to fixed time
schemes, they save the computation cost for generating Poisson random
numbers when the timestep tends to zero. Furthermore, as we will see later,
in terms of path simulation, jump-adapted discretisations have a very
crucial property that keeps the multilevel approach valid: within each
timestep we can neglect the jump term and only take the Brownian component
into consideration. As a matter of fact, the scheme can conveniently adopt
the Brownian bridge technique used for estimator construction in the
previous paper \cite{giles07b} so that improved convergence could be
obtained as well.


\section{Multilevel approach in the presence of jump}

In all of the cases to be presented, we simulate the paths using the
jump-adapted Milstein scheme proposed above.

In the case of a jump-diffusion process, since theorem of computation
complexity in the case of diffusion processes requires the weak convergence
and ML estimator convergence of discretisation schemes, we have to justify
them accordingly for different construction of estimators. These numerical
analysis is being done in a working paper in preparation \cite{xg11b}.

Apart from theoretical issues, there arise two challenges in the
implementation. The first problem is that the path simulation on coarse
levels needs to be revised in the presence of varying timesteps of the
jump-adapted time grid. Another is how to devise suitable estimators for
various payoffs in coping with a jump-adapted time grid. The third concern
is whether optimal samples on each level used by previous algorithm in \cite%
{giles08b} should be modified. Due to presence of jump, the computational
cost is
\begin{equation*}
Cost=\sum_{\ell=0}^{L}\sum_{i=1}^{N_{\ell}}(N_{T}^{(i)}+2^{\ell}),
\end{equation*}%
where $N_{T}^{(i)}$ is the number of jumps in each scenario. However, the
expected number of jumps is finite and therefore the cost bound in
assumption iv) will still remain valid for an appropriate choice of the
constant $c_{3}$ therefore using previous algorithm should work well. In
implementation numerical results indicates that this is appropierate for
small $\lambda.$


\subsection{Path simulation of multilevel approach in the jump-adapted time
grid}

When the multilevel approach is applied to the simulation with a fixed time
discretisation of a jump-diffusion process, the algorithm maintains the
orinigal framework straightforward. While for jump-adapted schemes,
construction of coarse path simulation needed for the estimators (\ref%
{eq:est_l}) of the payoff differs from previous case.

In the case of fixed time grid for geometric Brownian motion, every path
sample used for calculating $\hP_{\ell}^{(i)} \!-\! \hP_{\ell-1}^{(i)}$ comes from
two discrete approximations of Brownian path, which are called the fine path
and the coarse path. For every $l$, every timestep of coarse grid is
completely the same as the corresponding two timesteps of the fine grid
starting from the same endpoint. Since the brownian increment of the coarse
timestep is equidistributed to the sum of two increments of corresponding
fine ones, we can do the coarse path simulation without generating extra
random normal number.


While in the case of jump-adapted time grid, due to the presence of jump
time, the construction of a path sample in the coarse grid using the
brownian increments generated in the fine grid needs to be clarified. In
this case, the path sample is discontinuous in its jump time. To avoid such
discontinuity, notice that within each timestep of jump-adapted time grid,
the path is purely driven by Brownian component and therefore reserves
continuity. For the coarse grid and the fine grid in the same level, call
the finer grid points midpoints for short. For a particular midpoint, the
timestep, which is formed by the last jump time before this midpoint and the
first jump time after it is called midpoint timestep. In this timestep,
Brownian increment of the coarse path sample can be obtained by the
summation of Brownian increments of corresponding two timestep in fine grid.
In remaining timesteps of coarse grid, construction of the coarse path
sample uses the same Brownian increment as the one in fine grid does. Hence
we have defined the coarse path sample construction according to this
midpoint construction, which can be seen clearly in figure \ref{midpoint}.
\begin{figure}[t!]
\begin{center}
\hspace{-1in} \includegraphics[width=1.2\textwidth]{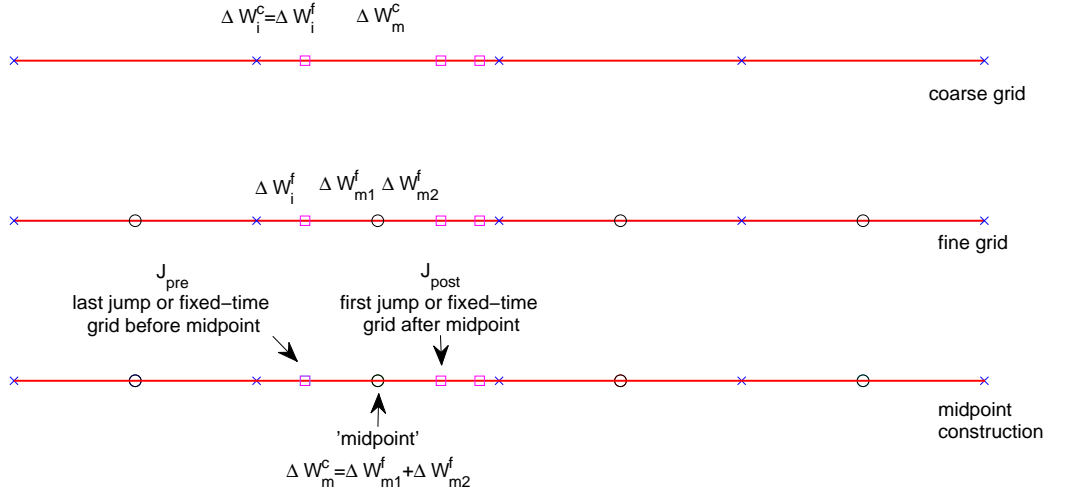}
\end{center}
\par
\vspace{-.1in}
\caption{Midpoint construction}
\label{midpoint}
\end{figure}



%

\subsection{Estimator construction}


For the estimator construction, there comes concern whether extra bias is
introduced for estimator in the jump-diffusion process. To secure the
correctness of the identity\ref{eq:identity}, we must avoid introducing any
undesired bias, hence it is required that
\begin{equation}
\EE[\hP_{\ell}^f] = \EE[\hP_{\ell}^c].  \label{eq:equality}
\end{equation}

This means that the definitions of $\hP_{\ell}$ when estimating $E[\hP_{\ell} \!-\! \hP%
_{\ell-1}]$ and $E[\hP_{\ell+1} \!-\! \hP_{\ell}]$ must have the same expectation.

In the case of path-dependent payoffs in geometric Brownian motion, we
approximate the payoff function by Brownian bridge interpolation technique
using path values on the fine and the coarse grid. This technique is also
available in jump case to reduce the variance of the estimator, where the
coarse-path estimator will involve the information from generations of
fine-path sample in the two corresponding timesteps. One thing to notice in
jump-adapted schemes is that we can utilise this construction only for the
timesteps including midpoint. For other timesteps of coarse grid, the
construction of coarse-grid estimators will be the same as the fine-grid
one. Estimators will be discussed in the following respectively
corresponding to their payoffs.


In the following we will show the numerical results of several options. All
of them are done for Merton model in which the jump-diffusion SDE under
risk-neutral measure is
\begin{equation*}
\dfrac{\D S(t)}{S(t-)}=(r-\lambda m)\,\D t+\sig\,\D W(t)+\D J(t),\quad 0\leq
t\leq T.
\end{equation*}%
where $\lambda $ is the jump intensity and jump magnitude satisfies $\log
Y_{i}\sim N(a,b)$, $r$ is the risk-free rate, $\sig$ is the volatility of
stock price and $m=\EE[Y_i]-1$ is the compensator to ensure the discounted
stock price is a martingale. All of the simulations in this section use the
parameter values $S_{0}\!=\!100$, $K\!=\!100$, $T\!=\!1$, $r\!=\!0.05$, $\sig%
\!=\!0.2$, $a\!=\!0.1$, $b\!=\!0.2$, $\lambda \!=\!1$. We thank Giles
providing the code for \cite{giles08}, based on which we can produce the current code
to generate numerical results and figures.

\subsection{Vanilla call option}

\begin{figure}[t]
\begin{center}
\includegraphics[width=\textwidth]{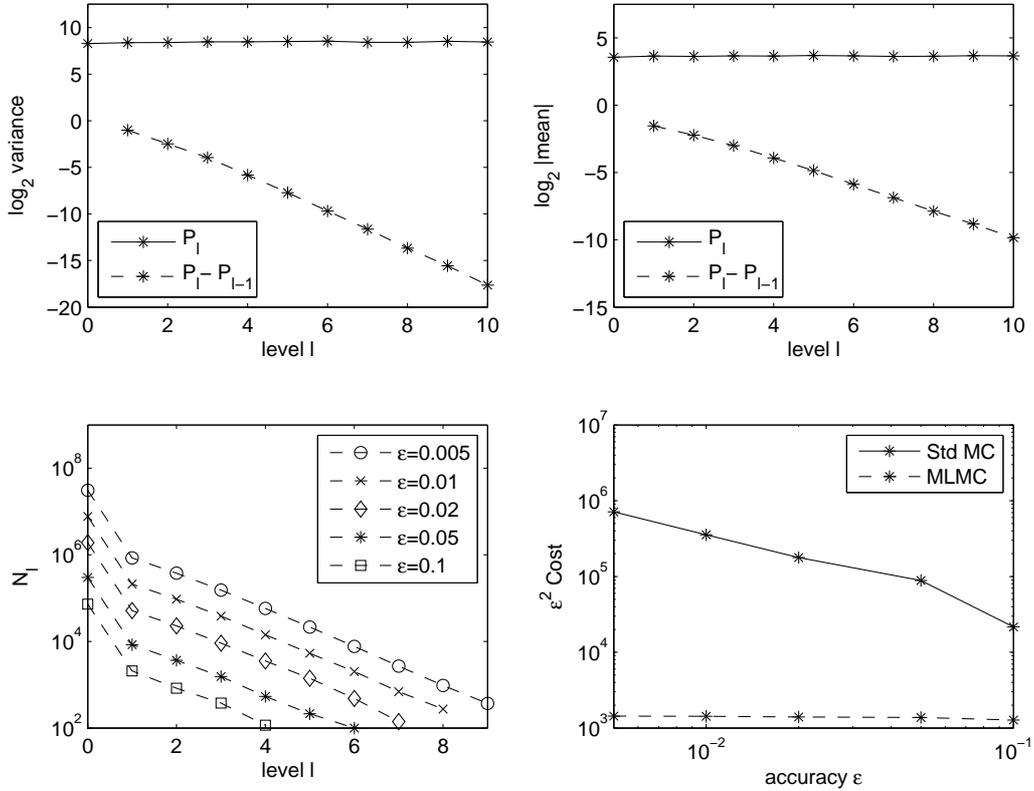}
\end{center}
\par
\vspace{-0.2in}
\caption{Vanilla option}
\label{fig:geometric Brownian motion1}
\end{figure}
For the vanilla option with the payoff $\exp (-rT)\max (S(T)-K,0)$, Figure %
\ref{fig:geometric Brownian motion1} shows the numerical results.

The top left plot shows the behaviour of the variance of both $\hP_{\ell }$
and the multilevel correction $\hP_{\ell }\!-\!\hP_{\ell -1}$, estimated
using $10^{5}$ samples so that the Monte Carlo sampling error is negligible.
The slope of the MLMC line indicates that $V_{\ell }\!\equiv \!%
\VV[\hP_\ell\!-\!\hP_{\ell-1}]\!=\!O(h_{\ell }^{2})$, corresponding to $%
\beta =2$ in condition $iii)$ of Theorem \ref{thm:cc}. The top right plot
shows that $\EE[\hP_\ell\!-\!\hP_{\ell-1}]$ is approximately $O(h_{\ell })$,
corresponding to $\alpha =1$ in condition $i)$. Noting that the payoff is
Lipschitz, both of these are consistent with the first order strong
convergence proved in \cite{pb10}.

The bottom two plots correspond to five different multilevel calculations
with different user-specified accuracies to be achieved. These use the
numerical algorithm given in \cite{giles08b} to determine the number of grid
levels, and the optimal number of samples on each level, which are required
to achieve the desired accuracy. We use the computational cost $%
\sum_{\ell=0}^{\ell}\sum_{i=1}^{N_{\ell}}(N_{T}^{(i)}+2^{\ell})$ to take account into
the effect of jump. The left plot shows that in each case many more samples
are used on level 0 than on any other level, with very few samples used on
the finest level of resolution. The right plot shows that the the multilevel
cost is approximately proportional to $\eps^{-2}$, which agrees with the
computational complexity bound in Theorem \ref{thm:cc} for the $\beta \!>\!1$
case.

\subsection{Asian option}

The payoff of the Asian option we consider is
\begin{equation*}
P = \exp(-rT)\ \max\left(0, \overline{S} \!-\! K \right),
\end{equation*}
where
\begin{equation*}
\overline{S} = T^{-1} \int_0^T S(t) \ \D t.
\end{equation*}
$n_T\!=\!T/h$ is the number of timesteps. \cite{giles07b} shows that
accuracy can be achieved by approximating the behaviour of a process within
a timestep as an I\^{t}o process with constant drift and volatility,
conditional on the computed endpoint values $\hS_n$. Taking $b_n$ to be the
constant volatility within the interval $[t_n,t_{n+1}]$, in other words, we
define brownian interpolation in the coarse grid at $t$ as

\begin{equation}
\hS(t) = \hS_n + \mu(\hS_{n+1}^- - \hS_n)+ b_n [W_t-W_n-\mu(W_{n+1}-W_n)] ,
\label{eq:midpoint}
\end{equation}
where $\mu=(t-t_n)/h,~h=t_{n+1}-t_n$.

This implies
\begin{equation*}
\int_{t_{n}}^{t_{n+1}}\hS(t)\ \D t={\ \textstyle\frac{1}{2}}%
h(S(t_{n})+S(t_{n+1}-))+b_{n}\Delta I_{n},
\end{equation*}%
where $\Delta I_{n}$ is
\begin{equation*}
\Delta I_{n}:=\int_{t_{n}}^{t_{n+1}}(W(t)-W(t_{n}))\ \D t\ -\ {\ \textstyle%
\frac{1}{2}}\,h\Delta W,
\end{equation*}%
satisfying $\Delta I_{n}\sim N(0,h^{3}/12)$ , and is independent of $\Delta
W $. Let $b_{n}=b(\hS_{n},t_{n})$, the fine-path approximated payoff would
be
\begin{equation*}
\overline{S}^{f}=T^{-1}\sum_{n=0}^{n_{T}-1}\left( {\ \textstyle\frac{1}{2}}%
\,h\,(\hS_{n}\!+\!\hS_{n+1}^{-})+b_{n}\Delta I_{n}^{f}\right).
\end{equation*}

In a jump-adapted grid, the coarse-path approximation is the same in most
timesteps except in the midpoint timestep $\Delta I_n^c$ is derived from the
fine-path values, namely 

\begin{figure}[t!]
\begin{center}
\includegraphics[width=\textwidth]{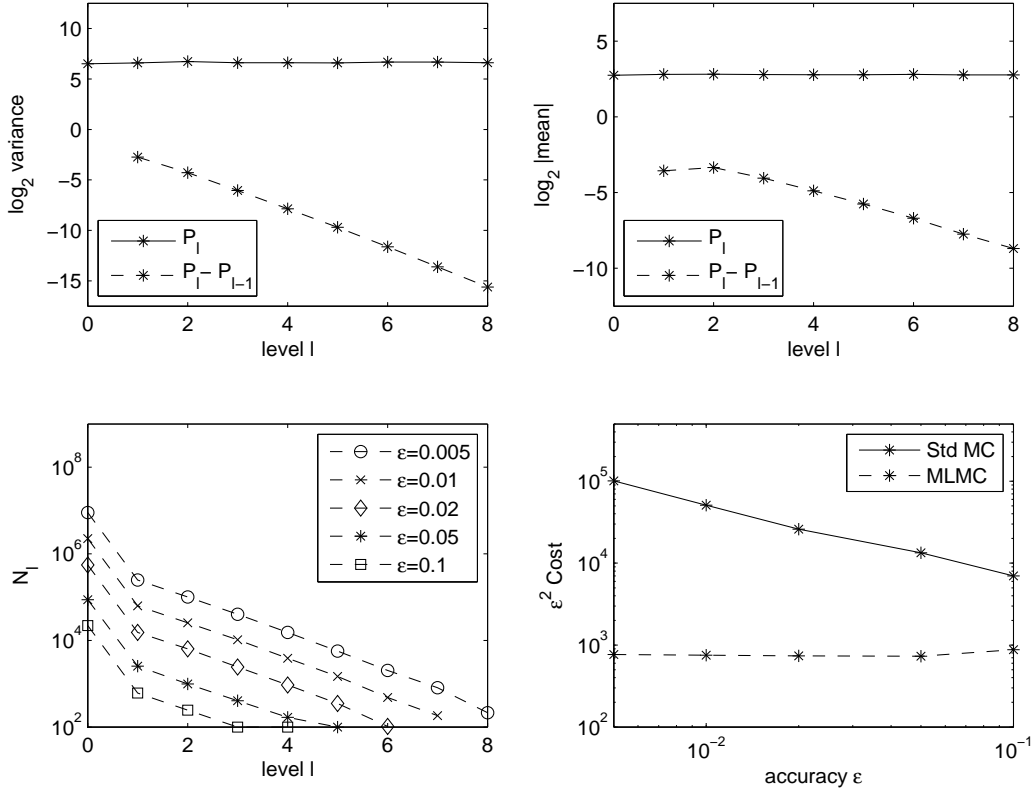}
\end{center}
\par
\vspace{-0.2in}
\caption{Asian option}
\label{fig:jd2}
\end{figure}

\begin{eqnarray*}
\lefteqn{\int_{t_n}^{t_{n+2}} (W(t)-W(t_n)) \ \D t\ -\ \textstyle \frac{1}{2}%
(t_{n+2}-t_{n}) (W(t_{n+2})-W(t_n))} && \\
&=& \int_{t_n}^{t_{n+1}} (W(t)-W(t_n)) \ \D t\ -\ {\ \textstyle \frac{1}{2} }%
\, (t_{n+1}-t_{n}) \left(W(t_{n+1})-W(t_n)\right) \\
&+& \int_{t_n+h}^{t_{n+2}} (W(t)-W(t_{n+1})) \ \D t\ -\ {\ \textstyle \frac{1%
}{2} }\, (t_{n+2}-t_{n+1}) \left(W(t_{n+2})-W(t_{n+1})\right) \\
&+& {\ \textstyle \frac{1}{2} }\,
(t_{n+2}-t_{n+1})\left(W(t_{n+1})-W(t_n)\right) - {\ \textstyle \frac{1}{2} }%
\, (t_{n+1}-t_{n})\left(W(t_{n+2})-W(t_{n+1})\right),
\end{eqnarray*}
and thus
\begin{equation*}
\Delta I^c = \Delta I^{f1}_n + \Delta I^{f2}_n + {\ \textstyle \frac{1}{2} }%
\, (t_{n+2}-t_{n}) (\mu\Delta W^{f1} - (1-\mu)\Delta W^{f2}),
\end{equation*}
where $\mu_n=(t_{n+2}-t_{n+1})/(t_{n+2}-t_{n})$, $\Delta I^c$ is the value
for the coarse timestep in midpoint timestep; $\Delta I^{f1}$ and $\Delta
W^{f1}$ are the values for fine timestep in the first fine-path timestep
constituting the midpoint timestep; $\Delta I^{f2}$ and $\Delta W^{f2}$ are
the values for the fine timestep in the latter one constituting the midpoint
timestep.

Figure \ref{fig:jd2} shows the numerical results for parameters $%
S(0)\!=\!100 $, $K\!=\!100$, $T\!=\!1$, $r\!=\!0.05$, $\sig\!=\!0.2$, $%
a\!=\!0.1$, $b\!=\!0.2$, $\lambda\!=\!1$. All the results are similar to the
pure diffusion case.

\subsection{Lookback option}

The payoff of the lookback option we consider is
\begin{equation*}
P=\exp (-rT)\left( S(T)-\min_{0\leq t\leq T}S(t)\right) .
\end{equation*}%
Previous work \cite{giles07b} achieved a second order convergence rate for
the multilevel correction variance using the Milstein discretisation and an
estimator constructed by approximating the behaviour within a timestep as an
It\^{o} process with constant drift and volatility, conditional on the
endpoint values $\hS_{n}$ and $\hS_{n+1}$. Brownian Bridge results (see
section 6.4 in \cite{glasserman04}) give the minimum value within the
timestep $[t_{n},t_{n+1}]$, conditional on the end values, as
\begin{equation}
\hS_{n,min}={\ \textstyle\frac{1}{2}}\left( \hS_{n}+\hS_{n+1}-\sqrt{\left( %
\hS_{n+1}\!-\!\hS_{n}\right) ^{2}-2\,b_{n}^{2}\,h\log U_{n}}\ \right) ,
\label{eq:lookback1}
\end{equation}%
where $b_{n}$ is the constant volatility and $U_{n}$ is a uniform random
variable on $[0,1]$. The same treatment can be used for the jump-adapted
discretisation in this paper, except that $\hS_{n+1}^{-}$ must be used in
place of $\hS_{n+1}$ in \eqref{eq:lookback1}.

Equation \eqref{eq:lookback1} is used for the fine path approximation, but a
different treatment is used for the coarse path, as in \cite{giles07b}. This
involves a change to the original telescoping sum in \eqref{eq:identity}
which now becomes
\begin{equation}
\EE[\hP^f_L] = \EE[\hP^f_0] + \sum_{\ell=1}^L \EE[\hP^f_\ell \!-\!
\hP^c_{\ell-1}],  \label{eq:identity2}
\end{equation}
where $\hP^f_\ell$ is the approximation on level $\ell$ when it is the finer
of the two levels being considered, and $\hP^c_\ell$ is the approximation
when it is the coarser of the two. This modified telescoping sum remains
valid provided $\EE[\hP^f_\ell] = \EE[\hP^c_\ell]. $

Considering a particular timestep in the coarse path construction, we have
two possible situations. If it does not contain one of the fine path
discretisation times, and therefore corresponds exactly to one of the fine
path timesteps, then it is treated in the same way as the fine path, using
the same uniform random number $U_n$. This leads naturally to a very small
difference in the respective minima for the two paths.

\begin{figure}[t!]
\begin{center}
\includegraphics[width=\textwidth]{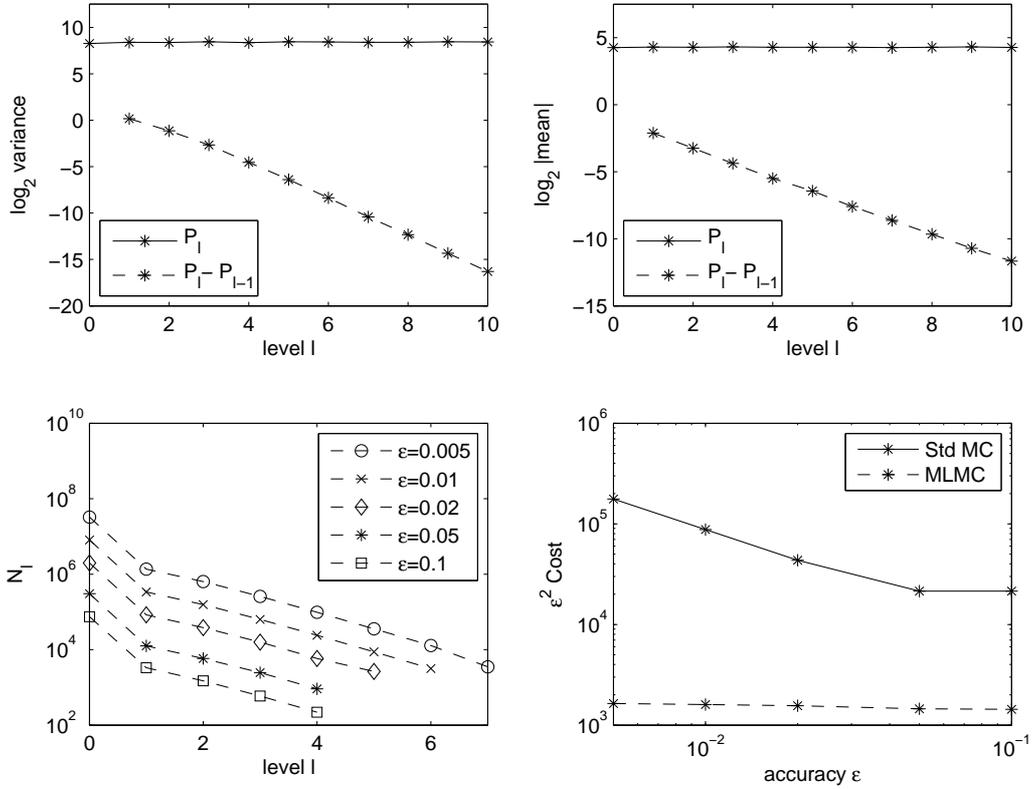}
\end{center}
\par
\vspace{-0.2in}
\caption{Lookback option}
\label{fig:geometric Brownian motion3}
\end{figure}

The more complicated case is the one in which the coarse timestep contains
one of the fine path discretisation times $t^{\prime }$, and so corresponds
to the union of two fine path timesteps. In this case, the value at time $%
t^{\prime }$ is given by the conditional Brownian interpolant
\begin{equation}
\hS(t^{\prime })=\hS_{n}+ \mu\, (\hS_{n+1}^{-}-\hS_{n}) +b_{n} \left(
W(t^{\prime })-W_{n}-\mu\, (W_{n+1}-W_{n})\right),  \label{eq:midpoint}
\end{equation}%
where $\mu =(t^{\prime }-t_{n})/(t_{n+1}-t_{n})$ and the value of $%
W(t^{\prime })$ comes from the fine path simulation. Given this value for $%
\hS(t^{\prime })$, the minimum values for $S(t)$ within the two intervals $%
[t_n,t^{\prime }]$ and $[t^{\prime },t_{n+1}]$ can be simulated in the same
way as before, using the same uniform random numbers as the two fine
timesteps.

The equality $\EE[\hP^f_\ell] = \EE[\hP^c_\ell] $ is respected in this
treatment because $W(t^{\prime })$ comes from the correct distribution,
conditional on $W_{n+1}, W_{n}$, and therefore, conditional on the values of
the Brownian path at the set of coarse discretisation points, the computed
value for the coarse path minimum has exactly the same distribution as it
would have if the fine path algorithm were applied.

Further discussion and analysis of this is given in \cite{xg11b}, including
a proof that the strong error between the analytic path and the conditional
interpolation approximation is at worst $O(h\,\log h)$.

Figure \ref{fig:geometric Brownian motion3} presents the numerical results.
The results are very similar to those obtained by Giles for geometric
Brownian motion \cite{giles07b}. The top two plots indicate second order
variance convergence rate and first order weak convergence, both of which
are consistent with the $O(h\,\log h)$ strong convergence. The computational
cost of the multilevel method is therefore proportional to $\eps^{-2}$, as
shown in the bottom right plot.

\subsection{Barrier option}

We consider a down-and-out call barrier option for which the discounted
payoff is
\begin{equation*}
P=\exp (-rT)\,(S(T)\!-\!K)^{+}\,\mathbbm{1}_{\left\{ M_{T}>B\right\} },
\end{equation*}%
where $M_{T}=\min_{0\leq t\leq T}S(t).$ The jump-adapted Milstein
discretisation with the Brownian interpolation gives the approximation
\begin{equation*}
\hP=\exp (-rT)\,(\hS(T)\!-\!K)^{+}\,\mathbbm{1}_{\left\{ \widehat{M}%
_{T}>B\right\} }
\end{equation*}%
where $\widehat{M}_{T}=\min_{0\leq t\leq T}\hS(t)$. This could be simulated
in exactly the same way as the lookback option, but in this case the payoff
is a discontinuous function of the minimum $M_{T}$ and an $O(h)$ error in
approximating $M_{T}$ would lead to an $O(h)$ variance for the multilevel
correction.

Instead, following the approach of Cont \& Tankov (see page 177 in \cite%
{ct04}), it is better to use the expected value conditional on the values of
the discrete Brownian increments and the jump times and magnitudes, all of
which may be represented collectively as $\mathcal{F}$. This yields
\begin{eqnarray*}
\lefteqn{\hspace*{-0.5in}\mathbb{E}\left[ \exp (-rT)\ (\hS(T)\!-\!K)^{+}%
\mathbbm{1}_{\left\{ \widehat{M}_{T}>B\right\} }\right] } \\
&=&\mathbb{E}\left[ \exp (-rT)\ (\hS(T)\!-\!K)^{+}\mathbb{E}\left[ %
\mathbbm{1}_{\left\{ \widehat{M}_{T}>B\right\} }\mid \mathcal{F}\right] %
\right]  \\
&=&\mathbb{E}\left[ \exp (-rT)\ (\hS(T)-K)^{+}\ \prod_{n=0}^{n_{T}-1}\hp_{n}%
\right]
\end{eqnarray*}%
where $\hp_{n}$ denotes the conditional probability that the path does not
cross the barrier $B$ during the $n^{th}$ timestep:
\begin{equation}
\hp_{n}=1-\exp \left( \frac{-2\,(\hS_{n}\!-\!B)^{+}(\hS_{n+1}^{-}\!-\!B)^{+}%
}{b_{n}^{2}\ (t_{n+1}-t_{n})}\right) .  \label{eq:barrier1}
\end{equation}

\begin{figure}[t!]
\begin{center}
\includegraphics[width=\textwidth]{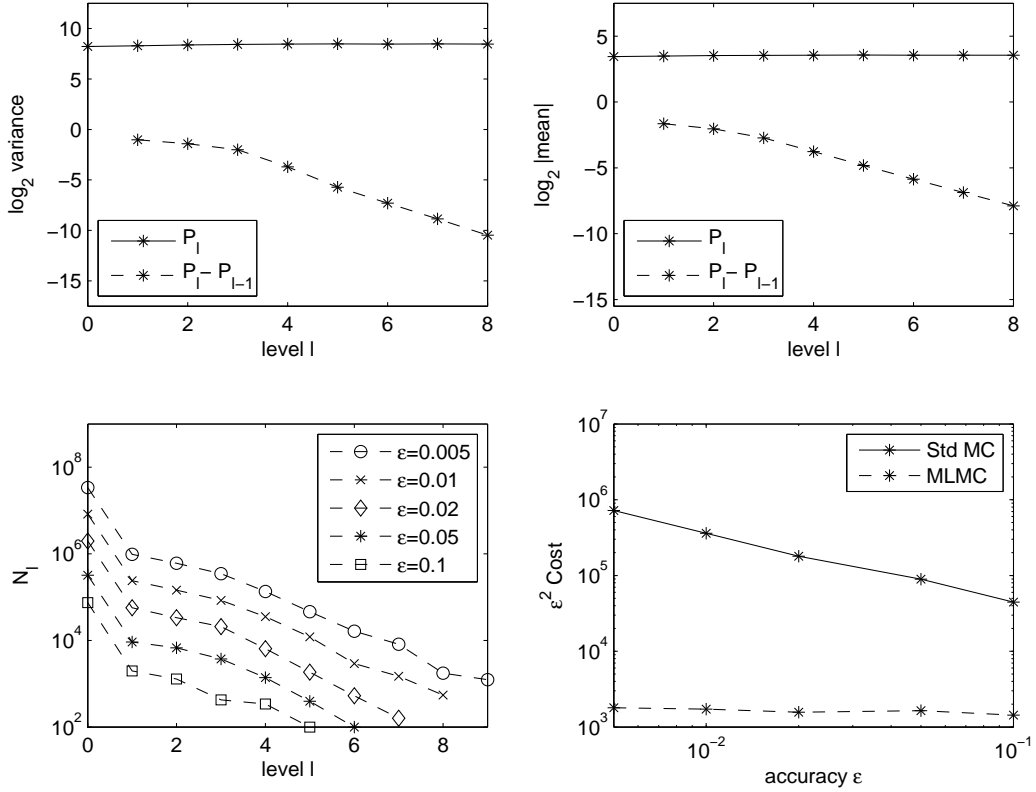}
\end{center}
\par
\vspace{-0.2in}
\caption{Barrier option}
\label{fig:geometric Brownian motion4}
\end{figure}

For fine-path value, we compute $\hp_{n}^{f}$ where $b_{n}$ is defined equal
to $b(\hS_{n},t_{n})$ within each timestep of the jump-adapted grid. Note
that $S_{n+1}$ in \eqref{eq:barrier1} should be replaced by the value of
left limit of the endpoint $S_{n+1}^{-}$, as in the lookback calculation.

For the coarse path calculation, we again deal separately with two cases.
When the coarse timestep does not include a fine path time, then we again
use (\ref{eq:barrier1}). In the other case, when it includes a fine path
time $t^{\prime }$ we evaluate the Brownian interpolant at $t^{\prime }$ and
then use the conditional expectation to obtain
\begin{eqnarray}
\hp_{n} &=&\left\{ 1-\exp \left( \frac{-2\,(\hS_{n}\!-\!B)^{+}(\hS%
(t)\!-\!B)^{+}}{b_{n}^{2}\ (t^{\prime }-t_{n})}\right) \right\}   \notag \\
&&\hspace{-0.2in}\times \ \left\{ 1-\exp \left( \frac{-2\,(\hS(t^{\prime +}(%
\hS_{n+1}^{-}\!-\!B)^{+}}{b_{n}^{2}\ (t_{n+1}-t^{\prime })}\right) \right\} .
\label{eq:barrier2}
\end{eqnarray}

Figure \ref{fig:geometric Brownian motion4} shows the numerical results for $%
K\!=\!100$, $B\!=\!85$. The top left plot shows that the multilevel variance
is $O(h_{\ell }^{\beta })$ for $\beta \approx 3/2$. This is similar to the
behavior for a diffusion process \cite{giles07b}. The bottom right plot
shows that the computational cost of the multilevel method is again almost
perfectly proportional to $\eps^{-2}$.

\subsection{Digital option}

The digital option considered here has the discounted payoff
\begin{equation*}
P = \exp(-rT)\ \mathbbm{1}_{\{S(T)>K\}}.
\end{equation*}

In \cite{giles07b}, a multilevel variance convergence rate of $O(h_{\ell}^{3/2})$
is achieved by smoothing the payoff using conditional expectation given the
brownian increments terminated one timestep before reaching the terminal
time $T$. The estimator is the probability that $\hS_{n_{\ell}} > K$ under
assumption of simple Brownian motion with constant drift $a_{n_{\ell}-1}$ and
volatility $b_{n_{\ell}-1}$ within last timestep where $n_{\ell}$ denotes number of
fine-path timesteps:

\begin{eqnarray*}
\EE[\hP_{\ell} - \hP_{\ell-1}] &=& \EE[~\EE[~f(\hS_{n_{\ell}-1}^f)-f(\hS_{n_{\ell}-1}^c)~|~
\Delta W_{i},\ i=1,\ldots,n_{\ell}-1]~] \\
&=& \EE[~\Phi \left( \frac{\hS_{n_{\ell}-1}^f \!+\! a_{n_{\ell}-1}^f h -
K}{b_{n_{\ell}-1}^f \sqrt{h}}\right) - \Phi \left( \frac{\hS_{n_{\ell}-2}^c \!+\!
2a_{n_{\ell}-2}^c h \!+\! b_{n_{\ell}-2}^c \Delta W_{n_{\ell}-1} - K}{b_{n_{\ell}-2}^c
\sqrt{h}}\right)~],
\end{eqnarray*}
where $\Phi$ is the cumulative density function of Normal variable, $h$ is
fine-path fixed-time timestep.

In the jump-adapted time grid, the relation between last jump time and the
last timestep before expiry leads to different expressions of above
conditional expectation estimator. Let In fact, there would be three cases:

\begin{enumerate}
\item The last jump time $J$ happens before penultimate fixed-time timestep,
i.e. $J<(N-2)\frac{T}{N}$, where $N$ is the number of timesteps in
fixed-time fine grid;

\item The last jump time is within the last fixed-time timestep , i.e. $%
J>(N-1)\frac{T}{N}$;

\item The last jump time is within penultimate fixed-time timestep, i.e. $%
(N-1)\frac{T}{N}>J>(N-2)\frac{T}{N}$.
\end{enumerate}

Correspondingly, different fine-path and coarse-path estimators are shown in
the following.
\begin{figure}[t]
\begin{center}
\hspace{-.2in} \includegraphics[width=1.1\textwidth]{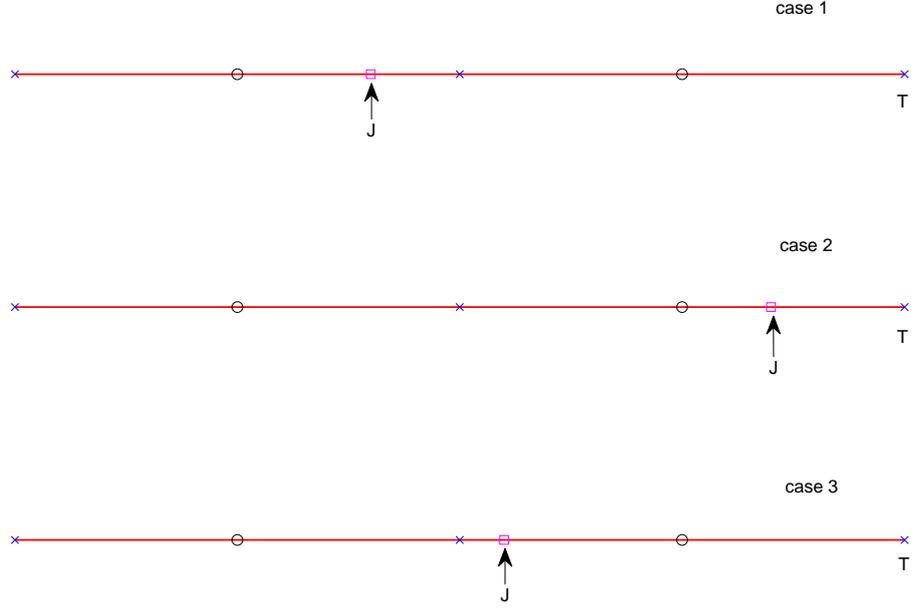}
\end{center}
\par
\vspace{-.3in}
\caption{Construction of conditional expectation estimator for Digital
option }
\label{Digital option}
\end{figure}

\begin{enumerate}
\item In case 1, the fine-path grid and coarse-path grid is the same as the
previous diffusion case, hence we do not need to change the estimator, which
is the probability that $\hS_{n_{T}}>K$ under assumption of approximation
the dynamics as a constant drift $a_{n_{T}-1}^{f}\equiv \!a(\hS%
_{n_{T}-1}^{f},T\!-h_{n_{T}})$ and volatility $b_{n_{T}-1}^{f}\!\equiv \!b(%
\hS_{n_{T}-1}^{f},T\!-\!h_{\ell})$ Brownian motion within last timestep.

\begin{equation*}
\hP_{\ell}^{f}=\Phi \left( \frac{\hS_{n_{T}-1}^{f}\!+\!a_{n_{T}-1}^{f}h-K}{%
\,\left\vert b_{n_{T}-1}^{f}\right\vert \sqrt{h}}\right) .
\end{equation*}%
where $\Phi $ is the cumulative Normal distribution, $h=T/2^{\ell}$ is
fine-path fixed-time timestep.

\begin{equation*}
\hP_{\ell-1}^{c}=\Phi \left( \frac{\hS_{n_{T}-2}^{c}\!+\!2a_{n_{T}-2}^{c}h\!+%
\!b_{n_{T}-2}^{c}\Delta W_{h}-K}{\left\vert b_{n_{T}-2}^{c}\right\vert \sqrt{%
h}}\right) .
\end{equation*}

\item In case 2, last timestep of fine path would be $h_{j}=T-t_{n_{T}-1}$.
Due to discontinuity of path before last jump, we must use the same
estimator for both fine and coarse path
\begin{equation*}
\hP_{\ell}^{f}=\Phi \left( \frac{\hS_{n_{T}-1}^{f}\!+\!a_{n_{T}-1}^{f}h_{j}-K}{%
\left\vert b_{n_{T}-1}^{f}\right\vert \sqrt{h_{j}}}\right) ,
\end{equation*}

\begin{equation*}
\hP_{\ell-1}^{c}=\Phi \left( \frac{\hS_{n_{T}-1}^{c}\!+\!a_{n_{T}-1}^{c}h_{j}-K%
}{\left\vert b_{n_{T}-1}^{c}\right\vert \sqrt{h_{j}}}\right) .
\end{equation*}

\item In the last case, $J$ denotes the last jump time, and $h_{j}=T-J-h$.
We again utilise Brownian increment generated for fine path $W_{h_{j}}$.

\begin{equation*}
\hP_{\ell}^{f}=\Phi \left( \frac{\hS_{n_{T}-1}^{f}\!+\!a_{n_{T}-1}^{f}h-K}{%
\left\vert b_{n_{T}-1}^{f}\right\vert \sqrt{h}}\right) ,
\end{equation*}

\begin{equation*}
\hP_{\ell-1}^{c}=\Phi \left( \frac{\hS_{n_{T}-2}^{c}\!+\!a_{n_{T}-2}^{c}h_{j}%
\!+\!b_{n_{T}-2}^{c}\Delta W_{h_{j}}-K}{\left\vert
b_{n_{T}-2}^{c}\right\vert \sqrt{h}}\right) .
\end{equation*}
\end{enumerate}

\begin{figure}[t!]
\begin{center}
\includegraphics[width=\textwidth]{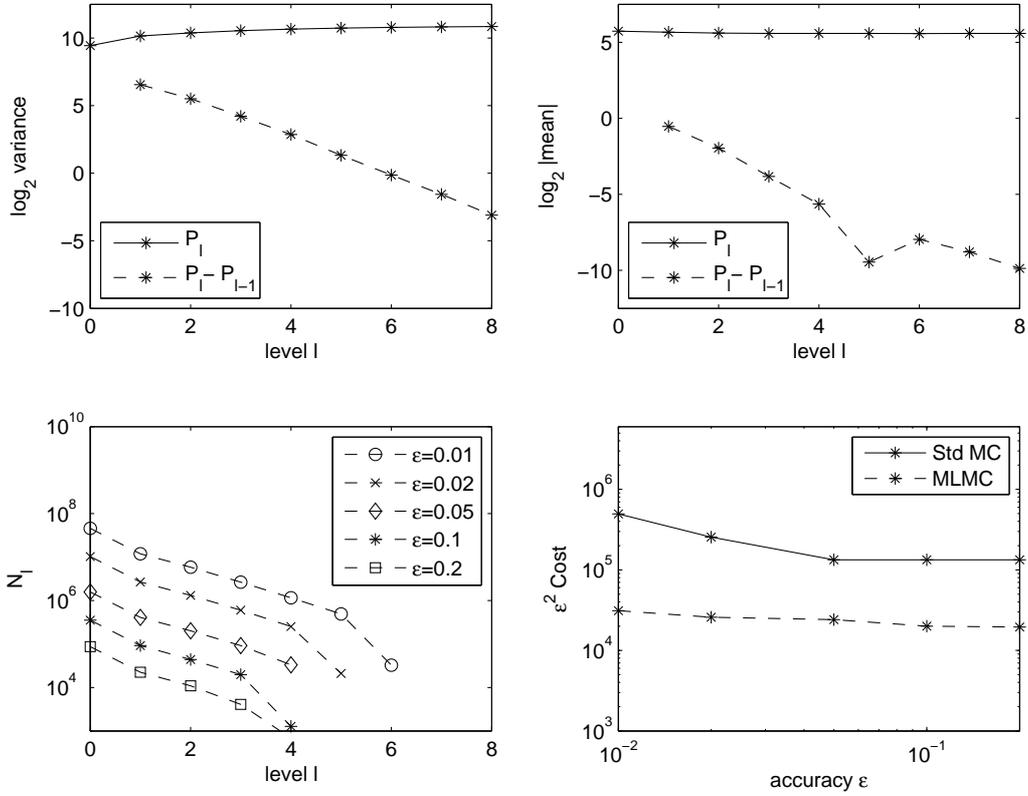}
\end{center}
\par
\vspace{-0.2in}
\caption{Digital option}
\label{fig:geometric Brownian motion5}
\end{figure}

In three cases, the conditional expectation of coarse-path estimator is
equal to fine-path one, thus equality (\ref{eq:equality}) is justified.
Figure \ref{Digital option} clearly demonstrates three cases.

Figure \ref{fig:geometric Brownian motion5} shows the numerical results for
parameters $S(0)\!=\!100$, $K\!=\!100$, $T\!=\!1$, $r\!=\!0.05$, $\sig%
\!=\!0.2$. The top left plot shows that the variance is approximately $%
O(h_{\ell}^{3/2})$, corresponding to $\beta\!=\!1.5$. The reason for this is
similar to the argument in \cite{giles07b}.

A different feature compared to the geometric Brownian motion case is that
the variance of the level 0 estimator is a constant increasing with jump
rate $\lambda$, instead of zero. The reason is simply because that the
trajectories in level 0 do vary in each simulation.

\section{Path-dependent rate cases}

In the case of path-dependent jump rates, which means the jump intensity
depends on the process, for instance $\lambda=\lambda(S_t,t)$, the
implementation of multilevel becomes difficult due to the fact that path may
jump at different time in the fine grid and coarse grid. These differences
in path sample might enlarge the difference of quantities used in the
computation of payoff between the fine grid and coarse grid, such as final
payoff in vanilla and digital option, minimum estimator by Brownian bridge
interpolation in lookback option and crossing probability in barrier option.
This leads to a increased variance of every single splitted multilevel
estimator and finally decrease the reduction of computation cost.

To tackle this obstacle, we propose two approaches. The first one
facilitates the idea of change of measure in dealing with multilevel method
for discontinuous payoffs, as we used before in dealing with digital option.
The second one uses thinning technique to simulate the desired jump time
with inhomogeneous rate through acceptance-rejection procedure. We again
need to change the measure when evaluating $P_{\ell}-P_{\ell-1}$ in both grids to
reduce variance.

\subsection{Cumulative intensity method}

In the first approach, which we call cumulative intensity method, we
generate jump times from cumulative intensity which is computed in
accordance with the dynamics evolution. We can see the idea from the case of
deterministic time-inhomogeneous rate. For a time-inhomogeneous Poisson
process whose instaneous intensity is $\lambda (s),$ the distribution of
next jump time $\tau _{i+1}$ given $\tau _{i}$ is
\begin{equation}
P(\tau _{i+1}-\tau _{i}\leq t|\tau _{i})=1-\exp (-\int_{\tau _{i}}^{\tau
_{i}+t}\lambda (s)\D s).  \label{jumpmarkov}
\end{equation}

construct jump-adapted schemes by approximating the instaneous jump rate
adaptively with the evolution of the path sample at each time grid point. In
other words, since jump intensity is path-dependent, the jump-adapted time
discretisation grid should be generated corresponding to the evolution of
the underlying process. In

In other words,
\begin{equation*}
\int_{\tau _{i}}^{\tau _{i+1}}\lambda (s)\D s\sim \mathcal{E}(1).%
\mbox{~(Exponential distribution with parameter 1.)}
\end{equation*}%
We can use this property to generate $\tau _{i+1}$:

\begin{equation*}
\Delta \tau _{i+1}=\inf \{{t\geq \tau _{i}:\int_{\tau _{i}}^{t}\lambda (s)\D %
s\>>\mathcal{E}_{i+1}\}},~~\mathcal{E}_{i}\sim \mathcal{E}(1).
\end{equation*}%
So%
\begin{equation*}
\tau _{i+1}=\tau _{i}+\Delta \tau _{i+1}.
\end{equation*}


\bigskip If the cumulative intensity
\begin{equation*}
\Lambda (t)=\int_{0}^{t}\lambda (s)\D s
\end{equation*}%
does not admit an explicit expression or it is computational intensive, we
can use simple quadrature with linear interpolation to generate the
approximated $\tau _{i+1}:$ $\ \htau_{i}$. \
\begin{equation*}
\int_{\htau_{i}}^{\htau_{i}+t_{n}}\lambda (s)\D s\approx
\sum_{i=1}^{n}\lambda _{i}h_{i}:=\Lambda _{n}
\end{equation*}%
where $t_{n}$ is the next $n$th point in the fixed-timestep grid, $h_{i}$ is
the timestep of the grid, $\lambda _{i}=\lambda (\htau_{i}+%
\sum_{j=1}^{i-1}h_{j})$ is the instaneous intensity on the left endpoint of
each timestep.

Let $N=\inf \{n:\Lambda _{n}>\mathcal{E}_{i+1}\}$, then
\begin{equation}
\Delta \htau _{i+1}=\dfrac{\mathcal{E}_{i+1}-\Lambda _{N-1}}{\lambda _{N}}
\label{jumptime}
\end{equation}

is a approximation of $\Delta \tau _{i+1}$.

In the case of random intensity where $\lambda$ only depends on current
state, we can define $\tau_{i}$ as

\begin{equation*}
\tau_{i}:=\inf{\large \{t:\int_0^t\lambda_s\D s\geq \mathcal{E}_{1}+\cdots+%
\mathcal{E}_{i} \}.}
\end{equation*}

We can use approximation of integral and \eqref{jumptime} to generate $\htau%
_{i}$.

The algorithm of jump-adapted (Milstein) scheme with cumulative intensity
would be:

\begin{verbatim}
    Algorithm (jump-adapted Milstein scheme with cumulative intensity)
\end{verbatim}

Suppose that we have a fixed time grid constituted of $N$ timesteps, $%
t_{i}^{\prime }=i\times \frac{T}{N},~i=1,\ldots ,N$.

\begin{enumerate}
\item Let $\Lambda ,t,E=0,$ $i,\ j=1.\ $Draw an exponential r.v. $\mathcal{E}%
_{j}$ with parameter 1;

\item While $t<T,$ do

\begin{enumerate}
\item $\Lambda ^{\prime }=\Lambda +\lambda (\hS(t),t)(t_{i}^{\prime }-t),$ $%
E=\mathcal{E}_{j}+E,$

\begin{enumerate}
\item  If $\Lambda ^{\prime }>\mathcal{E}_{j},$ 

Let $h=\dfrac{\mathcal{E}%
_{j}-\Lambda }{\lambda (\hS(t),t)},$ $\Lambda =E.$ 

Mark $\htau_{j}=t+h$ as a
jump time. Let $j=j+1$ and generate $\mathcal{E}_{j};$

\item Otherwise $\Lambda =\Lambda ^{\prime }.\ h=t_{i}^{\prime }-t;\ i=i+1.$
\end{enumerate}

\item (Use Milstein scheme) Simulate the evolution of the process from $t$
to $t+h,$ obtaining value of$\ \hS(t+h)$ depending on whether $t+h$ is a
jump time. 

Let $t=t+h.$
\end{enumerate}
\end{enumerate}

As a side output, after the finishing of algorithm we have got the
approximated jump times $\htau_{j}$, which combines  $t_{i}^{\prime
}=i\times \frac{T}{N},~i=1,\ldots ,N$ forming jump-adapted grid $\mathbb{T}%
=\{0=t_{0}<t_{1}<t_{2}<\ldots <t_{M}=T\}.$

The point process $\hN_{t}$ corresponding to the stopping times $\htau_{i}$
is defined by
\begin{equation}
\hN_{t}:=\sum_{i=1}^{\infty }\ii_{\{\htau_{i}\leq t\}}.  \label{point}
\end{equation}

This process is indeed a point process with piecewise constant intensity $%
\lambda (\hS(t_{k}),t_{k}),\ ~k=0,\ldots ,M-1$ in $[t_{k},t_{k+1})$.


%
%

\subsubsection{Multilevel treatment}

When it comes to multilevel approach, the problem is that the current
intensity may be different in fine and coarse grid, which leads to different
distributions of next jump time. This causes two problems. First, we can no
more use the random number generated for the path increment of fine grid to
the one of the coarse grid, which is intrinsically unacceptable for
multilevel approach. Secondly, the different final jumps may make a big
difference between the payoffs in the fine grid and in the coarse grid,
which is a similar challenge for digital option.

To handle these problems, we change the measure of Poisson rate in
calculating the expectation in the coarse grid so that the distribution of
next jump time agrees with the one in the fine grid. To do this let us first
introduce the change of measure for Poisson processes (see section 9.3 of
\cite{ct04}).

Suppose $N_{t}\sim \mbox{Poi}(\lambda _{1})$ under some probability density $%
P_{1}$, then under probability density $P_{2}$ defined by
\begin{equation}
\dfrac{\D P_{2}}{\D P_{1}}=\exp ((\lambda _{1}-\lambda _{2})t)(\dfrac{%
\lambda _{2}}{\lambda _{1}})^{N_{t}}  \label{eq:change}
\end{equation}%
we will have $N_{t}\sim \mbox{Poi}(\lambda _{2})$, where the above term is
called Radon-Nikodym derivative.

%

In the context of multilevel approach under state-dependent intensity model
with jump-adapted scheme, we want to calculate
\begin{equation*}
\mathbb{E}[\hP_{\ell}^{f}]-\mathbb{E}[\hP_{\ell-1}^{c}],
\end{equation*}%
which we can rewrite as
\begin{equation*}
\mathbb{E}[\hP_{\ell}^{f}-\hP_{\ell-1}^{f}\hR_{\ell}].
\end{equation*}

\bigskip We shall explain the meaning of this formula: instead of defining
in the jump-adapted grid formed by cumulative intensity approximated by
coarse timestep $2h_{\ell},$ $\hP_{\ell-1}$ is defined in the jump-adapted grid
formed by fine timestep $h_{\ell}$ approximation.

The Radon-Nikodym derivative $\hR_{\ell}$ is defined by

\begin{equation*}
\hR_{\ell}(T)=\dfrac{\D \hP_{c}}{\D \hP_{f}}=\exp (\Lambda _{\ell}^{f}-\Lambda
_{\ell}^{c})\prod\limits_{k}\dfrac{\lambda _{k}^{c}}{\lambda _{k}^{f}},
\end{equation*}%
in which $T$ is maturity, %
%
denoting $\lambda _{i}^{f}=\lambda (\hS_{i}^{f},t_{i}),\lambda
_{i}^{c}=\lambda (\hS_{i}^{c},t_{i}),\ \Lambda
_{\ell}^{f}=\sum_{i=1}^{n_{\ell}}\lambda _{i}^{f}h_{i}$ and $\Lambda
_{\ell}^{c}=\sum_{i=1}^{n_{\ell}}\lambda _{i}^{c}h_{i}$ are approximated
cumulative intensities upto the maturity in the fine and coarse grid, $k$ is
the index that $\htau_{k+1}^{f}\ $is the jump time, respectively.

This construction is valid since $\mathbb{E}[\hP_{\ell-1}^{c}]=\mathbb{E}[\hP%
_{\ell-1}^{f}\hR_{\ell}],$ which can be seen as a piecewise constant extension of (%
\ref{eq:change}). This can be justified by the theorem 2.31 in Chapter 2 of
\cite{Karr91}. Detail will be done in the future work.
%
%
%
%
%
%

\subsubsection{Variance convergence order}

In the following we shall give some intuitive analysis of the variance
convergence order, which is not a rigorous proof. Hopefully we can use
extreme path theory to prove it thoroughly in the future work.

The variance of estimator $\hP_{\ell}^{(j)}-\hP_{\ell-1}^{(j)}\hR_{\ell}^{(j)}$ is
\begin{eqnarray*}
\VV[\hP_{\ell}-\hP_{\ell-1}\hR_{\ell}] &=&\VV[\hP_{\ell}-\hP_{\ell-1}+\hP_{\ell-1}(1-\hR_{\ell})] \\
&\leq &\left[ \left( \VV[\hP_{\ell}-\hP_{\ell-1}]\right) ^{\frac{1}{2}}+\left( %
\VV[\hP_{\ell-1}(1-\hR_{\ell})]\right) ^{\frac{1}{2}}\right] ^{2}.
\end{eqnarray*}

\bigskip The first part has the order of $O(h\dot{)}$. To see the order of $%
\hR_{\ell}$, let us examine the asymptomatic order of squared variance of two
components of $\hR_{\ell}$: the exponential part and product part alternatively:

\begin{equation*}
\hR_{\ell}=\exp \left[ \sum_{i=1}^{n_{\ell}}(\lambda _{i}^{f}-\lambda _{i}^{c})h_{i}%
\right] \prod\limits_{k}\dfrac{\lambda _{k}^{c}}{\lambda _{k}^{f}}.
\end{equation*}

where $\lambda _{i}^{f}=\lambda (\hS_{i}^{f},t_{i}),\lambda _{i}^{c}=\lambda
(\hS_{i}^{c},t_{i}).$

We will concentrate on the effect caused in midpoint interval.

\bigskip Let $h$ be the timestep of uniform fine grid. Given a midpoint
interval $[t_{j},t_{j+2}]$, since by definition $\lambda _{j+1}^{c}=\lambda
_{j}^{c}$, the exponential part for $[t_{j},t_{j+2}]$ is

\begin{eqnarray*}
\sum_{i=j}^{j+1}(\lambda _{i}^{f}-\lambda _{i}^{c})h_{i} &=&(\lambda
_{j}^{f}-\lambda _{j}^{c})(h_{j}+h_{j+1})+(\lambda _{j+1}^{f}-\lambda
_{j}^{f})h_{j+1} \\
&\sim& O(h)h+\frac{\partial \lambda }{\partial S}(\hS(t_{j})^f,t_{j})(\hS%
_{j+1}^{f}-\hS_{j}^{f}) h \\
&\sim& O(h^{3/2}).
\end{eqnarray*}

\bigskip The first term is implied by $O(h)$ strong convergence and second
one comes from $\hS_{j+1}^{f}-\hS_{j}^{f}\sim \hS_{j}^{f}\triangle
W_{h_{j+1}}\sim O(h^{1/2})$.

Summing all intervals in the grid, since there are $N=T/2h$ midpoint
intervals, and the rest contributes higher order, asymptotically we get $%
\sum_{i=1}^{n_{\ell}}(\lambda _{i}^{f}-\lambda _{i}^{c})h_{i}\sim O(h)$. Thus
the exponential part satisfies

\begin{equation*}
\exp \left[ \sum_{i=1}^{n_{\ell}}(\lambda _{i}^{f}-\lambda _{i}^{c})h_{i}\right]
\sim 1+O(h).
\end{equation*}

\bigskip For the product part, when $t_{j+2}$ is a jump time, by similar
argument it holds that :

\begin{eqnarray*}
\frac{\lambda _{j+1}^{f}}{\lambda _{j+1}^{c}} &=&1+\frac{\lambda
_{j}^{f}-\lambda _{j}^{c}+\lambda _{j+1}^{f}-\lambda _{j}^{f}}{\lambda
_{j}^{c}} \\
&\sim &1+O(h^{1/2}).
\end{eqnarray*}

So the product part satisfies
\begin{equation*}
\prod\limits_{k}\dfrac{\lambda _{k}^{c}}{\lambda _{k}^{f}}\sim 1+O(h^{1/2}).
\end{equation*}

Asymptotically we get

\begin{equation*}
\VV\left[ \hR_{\ell}\right] =O(h).
\end{equation*}

Therefore we have evidence to support that the variance convergence order
for the multilevel estimator should be

\begin{equation*}
\VV[\hP_{\ell}-\hP_{\ell-1}\hR_{\ell}]=O(h).
\end{equation*}

Although cumulative intensity method can deal with the case where jump rate
is not bounded, the variance convergence order $O(h)$ leads to a
computational complexity of $O(\varepsilon ^{-2}\left( \log \varepsilon
\right) ^{2})$.

\subsection{Thinning method}

The idea of the thinning method is to construct a Poisson process with a
constant rate $\lambda _{\sup }$ which is an upper bound of the
state-dependent rate. This gives a set of candidate jump times, and these
are then selected as true jump times with probability $\lambda
(S_{t},t)/\lambda _{\sup }$.

\subsubsection{Algorithm}

\bigskip Suppose that we have simulated the jump time grid $\mathbb{J}%
=\{\tau _{1},\tau _{2},\ldots ,\tau _{m}\}$ generated by a Poisson process
with constant rate $\lambda _{\sup }$, which includes times at which jumps
occur in $[0,T]$. On the other hand, consider a fixed time grid constituted
of $N$ timesteps, $t_{i}^{\prime }=i\times \frac{T}{N},~i=1,\ldots ,N,$
which is used in discretisation schemes for diffusive SDEs. Now the
superposition of them will be a jump-adapted thinning grid $\mathbb{T}%
=\{0=t_{0}<t_{1}<t_{2}<\ldots <t_{M}=T\}$.

For a process which we can simulate the exact increments we have the
following thinning procedure:

\begin{enumerate}
\item Generate the waiting time for next jump time $\tau_{i+1}-\tau_i$ from
a Poisson process with constant rate $\lambda_{\sup}$;

\item Simulate the evolution of the process up to time $\tau_{i+1}$;

\item Draw a uniform random number $U\sim[0,1]$,

\begin{enumerate}
\item
If $p=\dfrac{\lambda (S(\tau _{i+1}-),\tau _{i+1})}{\lambda _{\sup }}>U$,
accept $\tau _{i+1}$ as a real jump time and simulate the jump; otherwise go
to 2.
\end{enumerate}
\end{enumerate}

For the general processes we use certain discretisation scheme, e.g. we have
the following jump-adapted thinning (Milstein) scheme:

\begin{enumerate}
\item Generate the jump-adapted time grid for a Poisson process with
constant rate $\lambda_{\sup}$;

\item Simulate each timestep using the (Milstein) discretisation;

\item When the endpoint $t_{n+1}$ is a candidate jump time, generate a
uniform random number $U\sim \lbrack 0,1]$, and if $U<p_{\tau }=\dfrac{%
\lambda (S(\tau -),\tau )}{\lambda _{\sup }}$, then accept $t_{n+1}$ as a
real jump time and simulate the jump.
\end{enumerate}

\bigskip If we look from the perspective of random measure notation, the
thinning method can also be formulated in the following ways. First let us
recall some definitions of point processes and random measures.

\subsubsection{Point processes and random measures}

There are two kinds of basic stochastic processes. The path evolution of the
first kind is driven by continuous increment, while in the second kind the
path will only change in the certain jump times. In order to describe the
stochastic processes of discrete paths, we need to introduce point process.

Given a filtered probability space $(\Omega ,\mathcal{F}_{t},\mathbb{P})$,
if we have a increasing sequence of increasing stopping times%
\begin{equation*}
0=T_{0}<T_{1}<T_{2}<\ldots
\end{equation*}

and
\begin{equation*}
\lim_{n\rightarrow \infty }T_{n}=\infty ,
\end{equation*}

then the point (counting) process $N_{t}$ associated with stopping times
(jump times) is defined as

\begin{equation*}
N_{t}=\sum_{n\geq 1}1_{\{T_{n}<t\}}.
\end{equation*}

It counts the number of jumps up to time $t$. To depict the jump amplitude
(mark) at each jump time, we can define marked point processes and
associated random measure.

Suppose mark $Y_{n}$ are random variables taking values on a mark space $%
E\subseteq R^{r}\backslash \left\{ 0\right\} $, and $Y_{n}$ are $\mathcal{F}
_{T_{n}}$ measurable, then $(Y_{n},T_{n})$ is called a marked point process
on \ $E\times \lbrack 0,\infty ].$

For any $A\subseteq B(E)$ and any $\omega \in \Omega $,

\begin{equation*}
\mu (\omega ;[0,t),A):=\sum_{n\geq 1}1_{\{T_{n}\left( \omega \right)
<t\}}1_{\{Y_{n}\left( \omega \right) \in A\}}
\end{equation*}

\bigskip\ defines the random measure associated with marked point process $%
(Y_{n},T_{n})$. It counts the number of jumps within $[0,t)$ whose amplitude
belonging to $A$. The class of marked point process is quite general,
actually it is a bijection to the class of c\`{a}dl\`{a}g process.

For each $\omega \in \Omega $, $\mu \left( \omega ;\cdot ,\cdot \right) $ is
an Radon measure on $\left( E\times \lbrack 0,\infty ],\mathcal{\ B}\left(
E\times \lbrack 0,\infty ]\right) \right) .$ Thus for each $\mathbb{P}$%
-measuable \ function $f$, the integral

\begin{equation*}
\int_{E\times \lbrack 0,t]}f(\omega ;z,s)\mu (\omega ;\D z,\D s):=\sum
_{\substack{ n\geq 1  \\ T_{n}\left( \omega \right) <t}}f\left( Y_{n}\left(
\omega \right) ,T_{n}\left( \omega \right) \right)
\end{equation*}

is a well-defined random variable.

\bigskip The compensator $\varphi (\D z)\D s$ of the random measure is
defined so that for all bounded $\mathbb{P}$-measurable \ function $f,$ we
have

\begin{equation*}
\int_{0}^{t}\int_{z\in E}f(z,s)\mu (\D z,\D s)-\int_{0}^{t}\int_{z\in
E}f(z,s)\varphi (\D z)\D s
\end{equation*}

is a martingale with respect to $\mathcal{F}_{t}$.

\bigskip Following the assumption of Platen, we assume that $\int_{z\in
E}\varphi (\D z)<\infty $ for all $s>0$.

\bigskip Under those notations, (\ref{eq:jumpSDE}) can be rewritten  as:

\begin{equation*}
\D S(t)=a(S(t-),t)\D t+b(S(t-),t)\D W(t)+c(S(t-),t)\int_{z\in
E}(z-1)p_{\lambda }(\D z,\D t),\quad 0\leq t\leq T.
\end{equation*}%
where $p_{\lambda }\left( \omega ;\cdot ,\cdot \right) $ is a Poisson random
measure, which means the compensator $\varphi (\D z)=\lambda g(z)\D z$ is
time independent and $g(z)$ is p.d.f of the mark . In this case the waiting
time $T_{n+1}-T_{n}$ is exponential distributed with a parameter $\lambda $.

Those general definitions are to allow more flexibility of the dynamics of
SDEs, e.g. it can admit state-dependent intensity. The dynamics of the
state-dependent jump-diffusion SDEs we will deal with can be written as

\begin{equation}
\D S(t)=a(S(t-),t)\D t+b(S(t-),t)\D W(t)+\int_{z\in E}c(S(t-),t,z)\mu (\D z,%
\D t),\quad 0\leq t\leq T.  \label{eq:jumpsde}
\end{equation}%
The compensator of $\mu $ is defined to be $\varphi (S(t-),\D %
z)\lambda (S(t-),t)g(z)\D z.$ We also adopt the assumption in \cite{GM04}
that it is bounded by a constant $\lambda _{\sup }$ and is absolutely
continuous.

The jump-adapted thinning Milstein scheme for (\ref{eq:jumpsde}) can be
formulated as

\begin{eqnarray*}
\hS_{n+1}^{-} &=&\hS_{n}+a_{n}\,h_{n}+b_{n}\,\Delta W_{n}+{\textstyle\frac{1%
}{2}}\,b_{n}^{\prime }b_{n}\,(\Delta W_{n}^{2}-h_{n}), \\
\hS_{n+1} &=&\hS_{n+1}^{-}+\int_{z\in E}1_{\{\frac{\lambda (\hS%
_{n+1}^{-},t_{n+1})}{\lambda _{\sup }}>U_{i}\}}c(\hS_{n+1}^{-},t_{n+1},z)\mu
(\D z,t_{n+1}).
\end{eqnarray*}

%
%
%
%

\bigskip %
%

%

\subsubsection{Multilevel treatment}

In the multilevel implementation, if we use the above algorithm with
different acceptance probabilities for fine and coarse level, there may be
some samples in which a jump candidate is accepted for the fine path, but
not for the coarse path, or vice versa. Because of first order strong
convergence, the difference in acceptance probabilities will be $O(h)$, and
hence there is an $O(h)$ probability of coarse and fine paths differing in
accepting candidate jumps. Such differences will give an $O(1)$ difference
in the payoff value, and hence the multilevel variance will be $O(h)$. A
more detailed analysis of this is given in \cite{xg11b}.

To improve the variance convergence rate, we use a change of measure so that
the acceptance probability is the same for both fine and coarse paths. This
is achieved by taking the expectation with respect to a new measure $Q$:
\begin{equation*}
\EE[\hP_{\ell}-\hP_{\ell-1}] = \EE_{Q}[\hP_{\ell}\prod_{\tau }R_{\tau }^{f}-%
\hP_{\ell-1}\prod_{\tau }R_{\tau }^{c}]
\end{equation*}%
where $\tau $ are the jump times. The acceptance probability for a candidate
jump under the measure $Q$ is defined to be $\frac{1}{2}$ for both coarse
and fine paths, instead of $p_\tau = \lambda (S(\tau -),\tau ) \,/\,
\lambda_{\sup }$. The corresponding Radon-Nikodym derivatives are
\begin{equation*}
R_{\tau }^{f}=\left\{ \begin{aligned} &2p_{\tau }^{f}, ~&\mbox{if}~
U<\frac{1}{2}&~;\\ &2(1-p_{\tau }^{f}),~&\mbox{if}~ U\geq\frac{1}{2}&~,
\end{aligned}\right. \quad \quad R_{\tau }^{c}=\left\{ \begin{aligned}
&2p_{\tau }^{c}, ~&\mbox{if}~ U<\frac{1}{2}&~;\\ &2(1-p_{\tau
}^{c}),~&\mbox{if}~ U\geq\frac{1}{2}&~, \end{aligned}\right.
\end{equation*}
Since $R_{\tau }^{f}-R_{\tau }^{c} = O(h)$ and $\hP_{\ell}-\hP_{\ell-1} =
O(h)$, this results in the multilevel correction variance $\VV_Q[\hP%
_{\ell}\prod_{\tau }R_{\tau }^{f} -\hP_{\ell-1}\prod_{\tau }R_{\tau }^{c}]$
being $O(h^{2})$.

\begin{figure}[t!]
\begin{center}
\includegraphics[width=\textwidth]{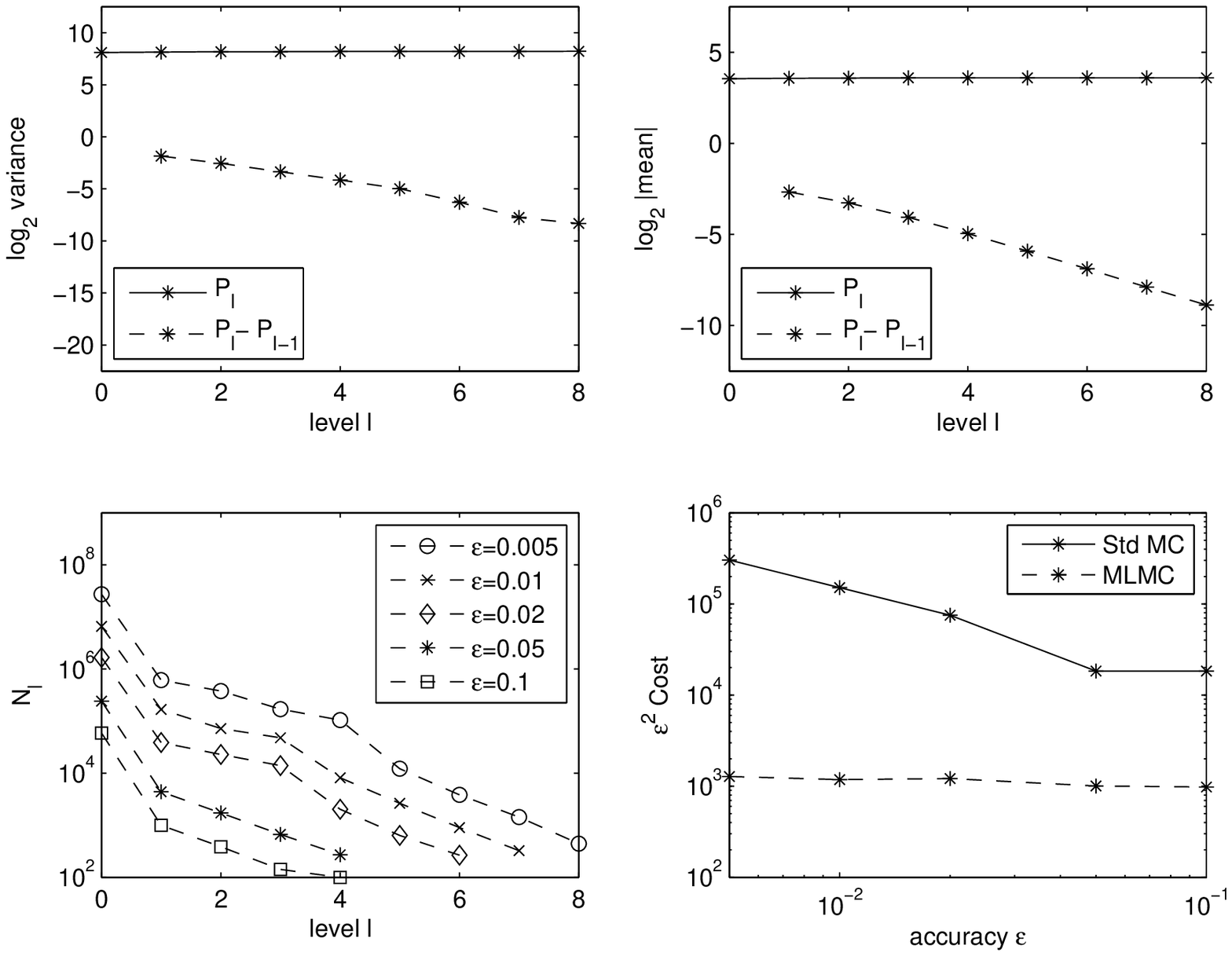}
\end{center}
\par
\vspace{-0.2in}
\caption{European call option with path-dependent Poisson rate using
thinning without a change of measure}
\label{fig:state2}
\end{figure}

\begin{figure}[t!]
\begin{center}
\includegraphics[width=\textwidth]{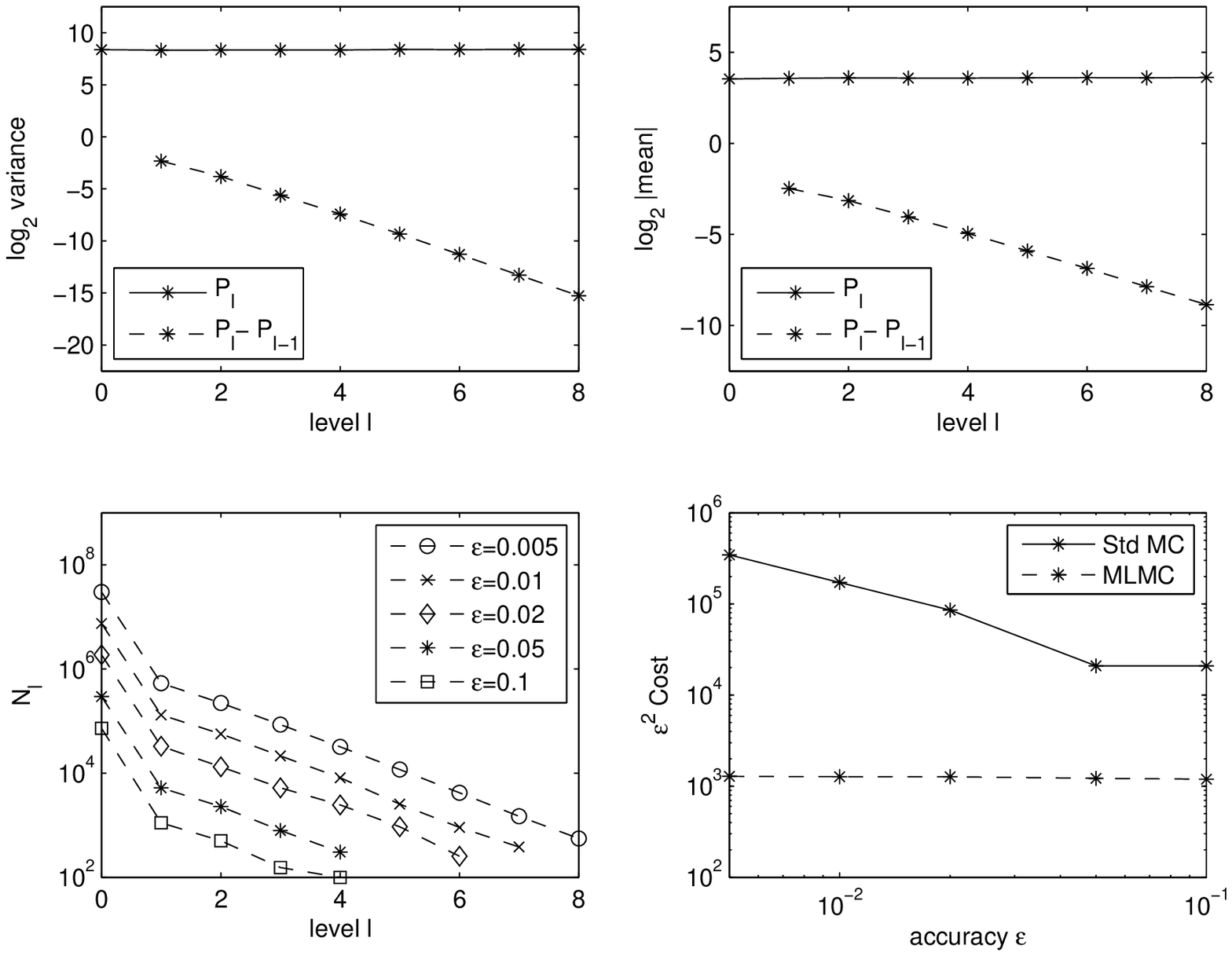}
\end{center}
\par
\vspace{-0.2in}
\caption{European call option with path-dependent Poisson rate using
thinning with a change of measure}
\label{fig:state3}
\end{figure}

\begin{figure}[t!]
\begin{center}
\includegraphics[width=\textwidth]{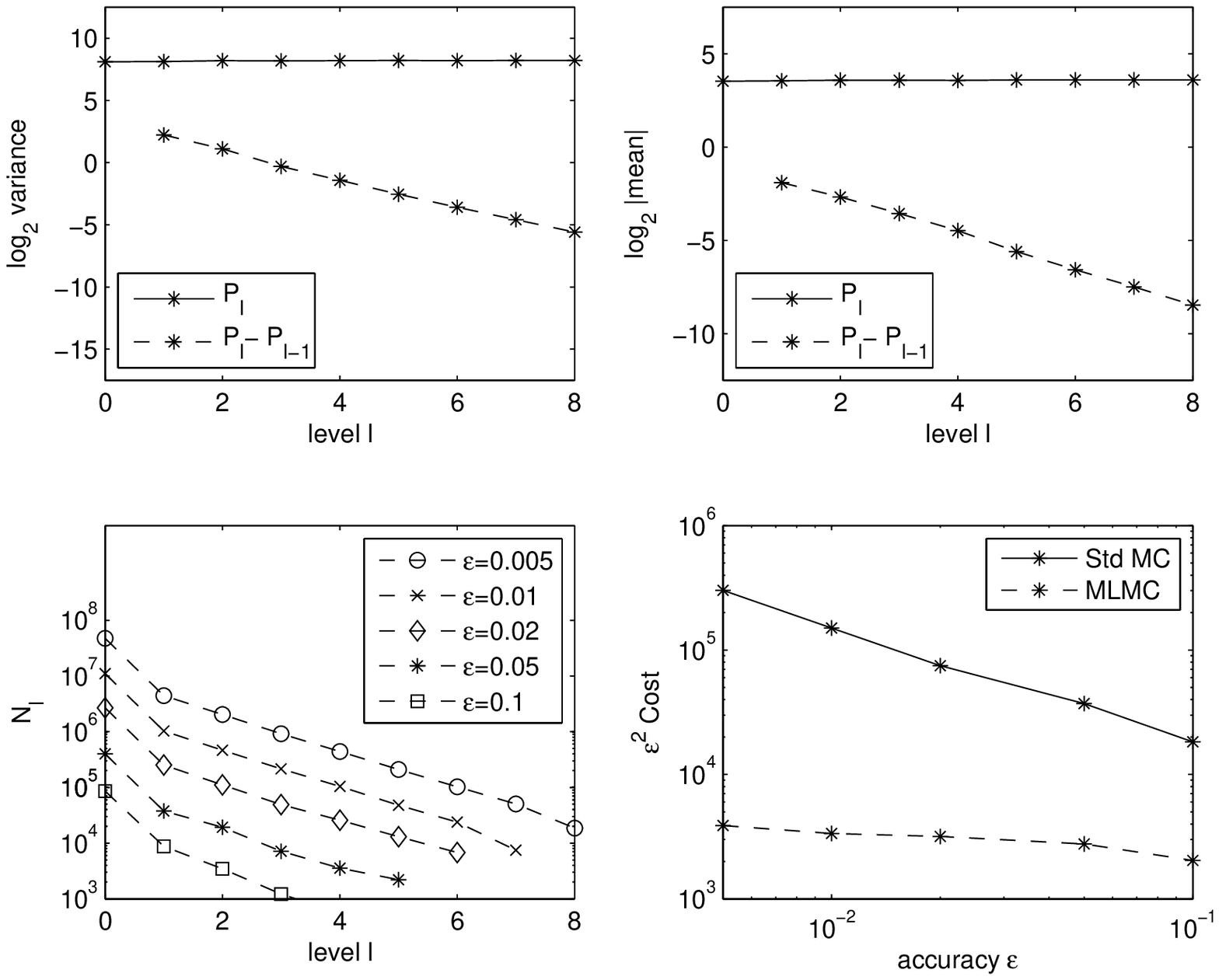}
\end{center}
\par
\vspace{-0.2in}
\caption{European call option with path-dependent Poisson rate using
cumulative intensity method}
\label{fig:cum}
\end{figure}

\bigskip The weak convergence of the jump-adapted discretisation with the
thinning procedure is proved in \cite{gm04b}, for a class of payoffs on
which they impose to be $4$th differentiable and that its up to 4th order
partial derivatives are uniformly bounded. By Stone--Weierstrass theorem we
can construct a sequence of smoothing polynomials which uniformly converges
to continuous payoff. The limits of this approach is that it is invalid for
discontinuous payoffs. To prove the convergent variance of multilevel
estimator, assuming the Lipschitz condition on $\lambda $, we decompose the
estimator into the constant rate part and the Randon-Nikdym derivative part,
disentangling the effect of path-dependence of intensity from the estimator.
Under such assumption, we can obtain the weak convergence of the estimators
for various payoffs by the same decomposition, circumventing the
difficulties caused by discontinuous payoffs. The advantage of this argument
is that it can reduce the analysis to the constant rate case so that the
proof is simplified.

If the analytic formulation is expressed using the same thinning and change
of measure, the weak error can be decomposed into two terms as follows:
\begin{equation*}
\EE_{Q}\left[ \hP_{\ell }\prod_{\tau }R_{\tau }^{f}-P\prod_{\tau }R_{\tau }%
\right] \ =\ \EE_{Q}\left[ (\hP_{\ell }-P)\ \prod_{\tau }R_{\tau }^{f}\right]
\ +\ \EE_{Q}\left[ P\ (\prod_{\tau }R_{\tau }^{f}-\prod_{\tau }R_{\tau })%
\right] .
\end{equation*}%
Using H{\"{o}}lder's inequality, the bound $\max (R_{\tau },R_{\tau
}^{f})\leq 2$ and standard results for a Poisson process, the first term can
be bounded using weak convergence results for the constant rate process, and
the second term can be bounded using the corresponding strong convergence
results \cite{xg11b}. This guarantees that the multilevel procedure does
converge to the correct value.

\subsubsection{Numerical results}

We show numerical results for a European call option using the underlying
dynamics under risk-neutral measure:

\begin{equation*}
\dfrac{\D S(t)}{S(t-)}=r\,\D t+\sig\,\D W(t)++\int_{z\in E}z\mu (\D z,\D %
t)-\int_{z\in E}zf(z)\D z\lambda \D t,\quad 0\leq t\leq T,
\end{equation*}

where the random measure $\mu $ has compensator $\lambda \D t$ with $\lambda
=\frac{1}{1+(S(t-)/S_{0})^{2}}$. $\ $The mark has a $\log $ normal
distribution the density function of which is denoted by $f(z)$.

We use $\lambda _{\sup }=1$ to generate thinning process. All other
parameters as used previously for the constant rate cases.

Comparing Figures \ref{fig:state2} and \ref{fig:state3} we see that the
variance convergence rate is significantly improved by the change of
measure, but there is little change in the computational cost. This is due
to the main computational effort being on the coarsest level, which suggests
using quasi-Monte Carlo on that level \cite{gw09}.

The bottom left plot in Figure \ref{fig:state2} shows a slightly erratic
behaviour. This is because the $O(h_{\ell })$ variance is due to a small
fraction of the paths having an $O(1)$ value for $\hP_{\ell }-\hP_{\ell -1}$%
. In the numerical procedure, the variance is estimated using an initial
sample of 100 paths. When the variance is dominated by a few outliers, this
sample size is not sufficient to provide an accurate estimate, leading to
this variability.

For comparison, we also show the numerical result given by cumulative
intensity method  in Figure \ref{fig:cum}. The bottom left plot indicates
the $O(h_{\ell })$ variance and the rest plots can be understood
consequently.

\section{Conclusions}

In this work we extend the Multilevel approach to scalar jump-diffusion SDEs
using jump-adapted schemes. The second order variance convergence is
maintained in the constant rate case, by constructing estimators using a
previous Brownian interpolation technique. In the state-dependent rate case,
we use thinning with a change of measure to avoid the asynchronous jumps in
the fine and coarse levels. We have also investigated an alternative
approach which can handle cases in which there is no upper bound on the jump
rate. 

The first natural future work is to do rigorous numerical analysis on the
convergence of variance of correction terms, and weak convergence of the ML
estimators, which is work in progress \cite{xg11b}. The second direction is to investigate other cases of model
based on specific infinite activity L\'{e}vy processes, e.g. variance gamma. We also plan to
investigate whether the multilevel quasi-Monte Carlo method will further
reduce the cost.

%

\bibliographystyle{alpha}

\begin{thebibliography}{BLP05}

\bibitem[BLP05]{BP2005}
Nicola Bruti-Liberati and Eckhard Platen.
\newblock On the strong approximation of jump-diffusion processes.
\newblock technical report, {QFRC} research paper 157, University of
  Technology, Sydney, 2005.

\bibitem[CT04]{ct04}
R.~Cont and P.~Tankov.
\newblock {\em {Financial modelling with jump processes}}.
\newblock Chapman \& Hall, 2004.

\bibitem[Der11]{dereich11}
S.~Dereich.
\newblock {Multilevel Monte Carlo Algorithms for L{\'e}vy-driven SDEs with
  Gaussian Correction}.
\newblock {\em The Annals of Applied Probability}, 21(1):283--311, 2011.

\bibitem[DH11]{dh10}
S.~Dereich and F.~Heidenreich.
\newblock A multilevel monte carlo algorithm for l{\'e}vy-driven stochastic
  differential equations.
\newblock {\em Stochastic Processes and their Applications}, 121(7):1565--1587,
  2011.

\bibitem[Gil07]{giles07b}
M.B. Giles.
\newblock Improved multilevel {M}onte {C}arlo convergence using the {M}ilstein
  scheme.
\newblock In A.~Keller, S.~Heinrich, and H.~Niederreiter, editors, {\em Monte
  Carlo and Quasi-Monte Carlo Methods 2006}, pages 343--358. Springer-Verlag,
  2007.

\bibitem[Gil08a]{giles08}
M.B. Giles.
\newblock An extended collection of matrix derivative results for forward and
  reverse mode algorithmic differentiation.
\newblock Technical Report NA08/01, Oxford University Computing Laboratory,
  2008.

\bibitem[Gil08b]{giles08b}
M.B. Giles.
\newblock Multilevel {M}onte {C}arlo path simulation.
\newblock {\em Operations Research}, 56(3):607--617, 2008.

\bibitem[Gla04]{glasserman04}
P.~Glasserman.
\newblock {\em {M}onte {C}arlo Methods in Financial Engineering}.
\newblock Springer, New York, 2004.

\bibitem[GM03]{gm03}
P.~Glasserman and N.~Merener.
\newblock {Numerical solution of jump-diffusion LIBOR market models}.
\newblock {\em Finance and Stochastics}, 7(1):1--27, 2003.

\bibitem[GM04]{gm04b}
P.~Glasserman and N.~Merener.
\newblock Convergence of a discretization scheme for jump-diffusion processes
  with state-dependent intensities.
\newblock {\em Proc.~Royal Soc.~London~A}, 460:111--127, 2004.

\bibitem[GW09]{gw09}
M.B. Giles and B.J. Waterhouse.
\newblock {Multilevel quasi-Monte Carlo path simulation}.
\newblock {\em Advanced Financial Modelling}, page 165, 2009.

\bibitem[Hei01]{heinrich01}
S.~Heinrich.
\newblock {\em Multilevel {M}onte {C}arlo Methods}, volume 2179 of {\em Lecture
  Notes in Computer Science}, pages 58--67.
\newblock Springer-Verlag, 2001.

\bibitem[Kar91]{Karr91}
Alan~F. Karr.
\newblock {\em Point Processes And Their Statistical Inference}.
\newblock Second edition, 1991.

\bibitem[Keb05]{kebaier05}
A.~Kebaier.
\newblock Statistical {R}omberg extrapolation: a new variance reduction method
  and applications to options pricing.
\newblock {\em Annals of Applied Probability}, 14(4):2681--2705, 2005.

\bibitem[Mer76]{merton76}
R.C. Merton.
\newblock {Option pricing when underlying stock returns are discontinuous}.
\newblock {\em Journal of Financial Economics}, 3(1-2):125--144, 1976.

\bibitem[Pla82]{Platen82}
E.~Platen.
\newblock {A generalized Taylor formula for solutions of stochastic equations}.
\newblock {\em Sankhy{\=a}: The Indian Journal of Statistics, Series A},
  44(2):163--172, 1982.

\bibitem[Pla10]{pb10}
N.~Platen, E.and Bruti-Liberati.
\newblock {\em Numerical Solution of Stochastic Differential Equations with
  Jumps in Finance}, volume~64 of {\em Stochastic Modelling and Applied
  Probability}.
\newblock Springer-Verlag, 1st edition, 2010.

\bibitem[XG11]{xg11b}
Y.~Xia and M.B. Giles.
\newblock Numerical analysis of multilevel {M}onte {C}arlo for scalar
  jump-diffusion {SDE}s.
\newblock working paper in preparation, 2011.

\end{thebibliography}

\end{document}